\newcommand{\red}[1]{\textcolor[rgb]{1,0.1,0.1}{#1}}
\newtheorem{definition}{Definition}
\newtheorem{corollary}{Corollary}
\newtheorem{theorem}{Theorem}
\newtheorem{assumption}{Assumption}
 \newcommand{\indep}{\perp\!\!\!\!\perp} 
\newcommand{\oo}{{\circ\!{\--}\!\circ}}
\title{Causal discovery for time series from multiple datasets with latent contexts}
\author[1]{\href{mailto:<wiebke.guenther@dlr.de>?Subject=UAI 2023 CD from multiple datasets with latent contexts}{Wiebke~G\"{u}nther}{}}
\author[2,1]{Urmi~Ninad}
\author[1,2]{Jakob~Runge}
\affil[1]{%
    German Aerospace Center\\
    Institute of Data Science\\
    07745 Jena, Germany
}
\affil[2]{%
    Technische Universit\"{a}t Berlin\\
    Dept. of Electrical Engineering and Computer Science\\
    10623 Berlin, Germany
}
\begin{document}
\maketitle

\begin{abstract}
Causal discovery from time series data is a typical problem setting across the sciences. Often, multiple datasets of the same system variables are available, for instance, time series of river runoff from different catchments. The local catchment systems then share certain causal parents, such as time-dependent large-scale weather over all catchments, but differ in other catchment-specific drivers, such as the altitude of the catchment. These drivers can be called temporal and spatial contexts, respectively, and are often partially unobserved. Pooling the datasets and considering the joint causal graph among system, context, and certain auxiliary variables enables us to overcome such latent confounding of system variables. In this work, we present a non-parametric time series causal discovery method, \emph{J(oint)-PCMCI$^+$}, that efficiently learns such joint causal time series graphs when both observed and latent contexts are present, including time lags. We present asymptotic consistency results and numerical experiments demonstrating the utility and limitations of the method.
\end{abstract}

\section{Introduction}\label{sec:intro}
Causal discovery from observational data has gained widespread interest in recent years. Next to score-based methods~\cite{chickering2002learning}, Granger causality \cite{granger1969investigating}, and the more recent restricted structural causal models (SCM) framework \cite{peters2017elements,spirtes2016causal}, the constraint-based approach\citep{spirtes2000causation} to this discovery task exploits conditional independencies in the data to constrain causal graphs and can flexibly handle nonlinear dependencies.

Most real world data comes in the form of \emph{time series}, which provide opportunities and challenges for causal discovery~\citep{runge2019inferring}. While the inherent time order implies certain causal directions, time-series data typically violates the \emph{i.i.d.} assumption usually made in  conditional independence testing. Therefore, specific algorithms that target the challenges of time-series data have become an increasingly popular sub-category within causal discovery, for instance, versions of the PC algorithm and FCI \citep{entner2010causal,malinsky2018causal}, or the PCMCI framework~\citep{runge2019detecting,runge2020discovering,gerhardus2020high}. 

The methods mentioned above consider single multivariate time series datasets and aggregate samples across time. Another relevant development has been the incorporation of multiple datasets and modeling their different contexts~\citep{mooij2020joint,huang2020causal}, which can also be framed as a data-fusion problem~\citep{pearl2011transportability,bareinboim2016causal}. 

In the following, we illustrate the main ideas of this paper on the example of time series datasets of river runoff from different catchment systems~\citep{wagener2007catchment}. If we assume all of these to come from the same (stationary) distribution, we can just concatenate (pool) the data to obtain a larger sample and, hence, more reliable causal discoveries among the system variables. But multiple datasets can also be used to de-confound relationships: The local catchment systems often share certain causal parents, such as time-dependent large-scale weather dynamics over all catchments, that can be called temporal contexts and cause latent confounding between two or more system variables, if they are unobserved. If we now assume that such a time-dependent latent confounder is the same across all datasets, we can condition on the time index (or add a so-called \emph{time-dummy variable}) by aggregating samples across datasets, instead of aggregating across time. This then yields a joint graph across all time points.

But datasets not only share common causal drivers, they also differ in other dataset-specific drivers, such as the altitude or vegetation-type of the catchment, which can be called spatial contexts. As spatial contexts are constant across time, they do not constitute a confounding \emph{within} each dataset. However, in the pooled data they vary across datasets and lead to confounding in the joint graph across datasets. If spatial contexts are observed, they can be included as variables in the analysis and if they are unobserved, they can be de-confounded by the same idea as time-dependent confounders by assuming that they are constant across all time points and conditioning on the dataset index (or adding a so-called \emph{space-dummy variable}, not to be confused with physical space). The idea of conditioning on time and space is heavily employed in \emph{fixed-effect panel regression models} in econometrics~\citep{angrist2009mostly} and here we consider these for causal discovery. 

Next to deconfounding system variables, observed context variables can help to orient causal links: Consider two system variables $X\oo Y$ whose causal direction cannot be identified by Markov equivalence. In the joint graph we could add a context variable $C$ and assume (or learn) that $C\rightarrow X \oo Y$. Then applying the collider or orientation rules~\citep{meek1995causal} allows to infer the causal direction between $X$ and $Y$.

Our approach partially follows \citet{mooij2020joint}. Their approach is to pool data from different contexts, for instance, observational and interventional data, and do causal discovery on the pooled dataset, called \emph{joint causal inference} (JCI). In particular, they established a general framework to (i)~interpret contexts as auxiliary variables that describe the context of each dataset, (ii)~pool all the data from different contexts while keeping the contextual information of the data by including the auxiliary context variables into a single dataset, and (iii)~apply standard causal discovery to all data jointly, incorporating appropriate background knowledge on the causal relationships involving the context variables.

Our aim is to extend the PCMCI$^+$ time series causal discovery algorithm~\citep{runge2020discovering} to the case of datasets from multiple dataset- or time dependent contexts with potentially unobserved context confounders of the system variables. We term this technique \emph{J(oint)-PCMCI$^+$} because it combines the two JCI-ideas mentioned above, i.e., pooling datasets from multiple contexts, and adding observed context variables to the graph. We go beyond JCI by providing a specific algorithmic implementation of it in the time-series setting and using time- and space-dummy variables to account for latent context variables that confound system variables, that are conceptually similar to surrogate variables in \citep{huang2020causal}. Therefore, we are faced with the additional challenge of dealing with observed context as well as dummy variables simultaneously, which requires caution due to the fact that they are deterministically related to one another \cite{lemeire2012conservative}. This approach combines the advantages of PCMCI$^+$ regarding detection power and false positive control in the presence of strong autocorrelation~\citep{runge2020discovering} with the advantage of the JCI framework.
To summarize, we present a consistent causal discovery algorithm J-PCMCI$^+$ that can:
\begin{enumerate}
    \item[(i)] de-confound those system nodes that are confounded by latent contexts without having any knowledge of the latent contexts themselves;
    \item[(ii)] retain as much information about the causal links between the observed context and system variables as possible by checking conditional independencies appropriately;
    \item[(iii)] discover the correct induced causal graph between the system nodes.
\end{enumerate}

\section{Related Work}\label{sec:related}

In causal inference, the idea of context variables has been explored under different notions :  “policy variables” \cite{spirtes2000causation}, “force variables” \cite{pearl1993comment}, “decision variables” in influence diagrams \cite{dawid2002influence}, “selection variables” in selection diagrams \cite{bareinboim2013general}, and “environment variables” \cite{peters2016causal}. \cite{mooij2020joint} established a general framework to combine data from multiple contexts with traditional causal discovery techniques. 

In \cite{huang2020causal}, the problem of \emph{heterogeneous} data, which might correspond to varying dataset collection conditions (analogous to contexts in JCI), as well as the problem of \emph{non-stationary} data was addressed in a framework called CD-NOD.
Here, changing causal mechanisms across time or datasets were interpreted as confounding of the system by an unobserved \emph{pseudo-confounder}, so named because it can be written as a function of the dataset or the time index. This confounding was then addressed by introducing a \emph{surrogate variable} that captures changes of causal mechanisms, thereby deconfounding the pseudo-confounded system variables. Further, they employed the information of changing causal mechanisms to infer additional causal directions than standard causal discovery allows for by formalizing independence of cause and mechanism \cite{peters2017elements}. Note that explicitly known context variables can not be included in this setup. It also focuses on non-stationarity that can be modeled as a smooth function of time. This is a restriction we do not place.

The concept of changing regimes over time is similar to the presence of a temporal context. In particular, \citet{saggioro2020reconstructing} consider regimes that vary over time and assume that these regimes are not known a priori. This is in contrast to our assumptions that the context variables are constant over time or across the data sets which implies that we know from domain knowledge when a context change might happen. They present the  Regime-PCMCI algorithm that learns the regimes together with the causal graphs within each regime.

Certain Bayesian methods can also deal with heterogeneous data, i.e.\ data from different contexts, e.g. \cite{zhou2022causal}. Naturally, identifying the causal structure by a Bayesian method requires strong model assumptions. Instead of using dummy or surrogate variables, the authors suggest to impute possible latent covariates using an embedding method, and they also provide a way to infer the latent covariates jointly with the causal graph.

Constraint-based causal discovery methods for time-series data historically began with Granger causality \cite{granger1969investigating}, and since has been addressed in \citet{enter2010causal, malinsky2018causal, runge2019detecting, runge2020discovering, gerhardus2020high} to cover non-linear relationships, contemporaneous as well as lagged links, latent confounders and highly auto-correlated data. For an overview, see \cite{runge2019inferring}.

\section{Theoretical Foundations} \label{sec:found}
\begin{figure}[t]
    \centering
    \includegraphics[width=0.75\linewidth]{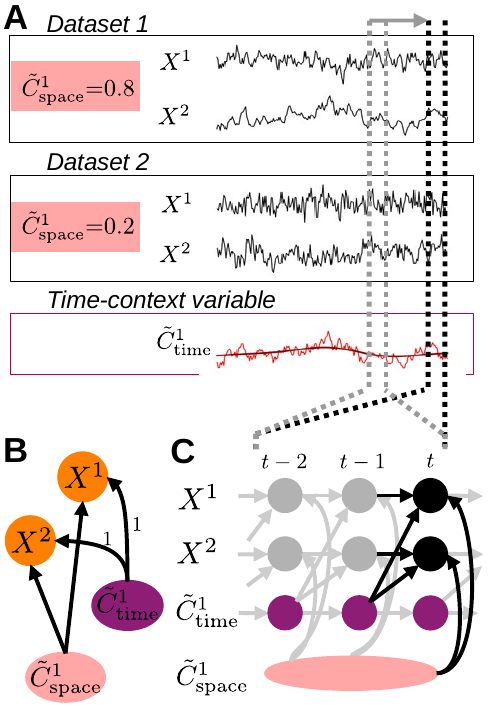}
    \caption{\textbf{Causal discovery with temporal- and spatial-contexts.} \textbf{(A)} Two datasets of system variables $X^1,X^2$ may be confounded by the same temporal context $\tilde{C}^1_{\rm time}$, but differ in dataset-specific characteristics that are constant over time, here autocorrelation, which can be represented by a spatial context $\tilde{C}^1_{\rm space}$. J(oint)-PCMCI$^+$ learns the causal relations which can be represented in \textbf{(B)} a summary causal graph (link labels denote time lags) or \textbf{(C)} a time series graph. Samples of the nodes are pooled from both datasets over the (user-defined) stationary part of the time series (grey dotted lines) leading to a repeating structure of the time series graph (grey links). Context nodes can also help orienting links since they can create colliders. If contexts are unobserved, J-PCMCI$^+$ can utilize temporal- or spatial dummy variables to de-confound system variable relationships.}
    \label{fig:intro}
\end{figure}

Within the JCI framework~\citep{mooij2020joint}, the causal relations of a system and its context are represented by a joint (or meta) structural causal model (SCM).
A \emph{system variable} is a time-dependent random variable whose distribution can change across datasets. In the following, a \emph{temporal context variable} is a time-dependent random variable~\footnote{The term \emph{random variable} is used for context variables in the sense that it is done in \cite{mooij2020joint}.} that remains the same across datasets. A \emph{spatial context variable} is a random variable that is constant over time and within a dataset but can change across datasets. See figure \ref{fig:intro} for an example. The information on which variables belong to the system or to the context is given as a domain assumption. We now formulate an assumption on the data-generating mechanism that always holds, unless stated otherwise.

\begin{assumption}[Joint time-dependent SCM]\label{ass:SCM}
    The underlying data-generating mechanism across datasets $d\in \mathcal{D}$ with $|\mathcal{D}|=M$ is an acyclic time-dependent structural causal model (SCM) involving the time-dependent system variables $\mathbf{X}_t=\{X_t^i\}_{i\in \mathcal{I}}$ at time $t$ as well as context variables $\tilde{\mathbf{C}}=\tilde{\mathbf{C}}_{time}\dot{\cup} \ 
    \tilde{\mathbf{C}}_{space}$ with temporal contexts $\tilde{\mathbf{C}}_{time,t}=\{\tilde{C}_t^k\}_{k\in \mathcal{K}_{\rm time}}$ and (time-independent) spatial-contexts $\tilde{\mathbf{C}}_{space}=\{\tilde{C}^l\}_{l\in \mathcal{K}_{\rm space}}$, for $i\in \mathcal{I}$:
    
        \begin{equation} \label{eq:scm}
            \begin{split}
            &\mathbf{X}^{d}_t:= \mathbf{f}({Pa}_X(\mathbf{X}^{d}_t),\operatorname{Pa}_{\tilde{C}_{\rm time}}(\mathbf{X}^{d}_t), \operatorname{Pa}_{\tilde{C}_{\rm space}}(\mathbf{X}^{d}_t), \boldsymbol{\eta}^{d}_t) \\ 
            &\mathbf{\tilde{C}}_{{\rm time}, t} := \mathbf{g}({Pa}_{\tilde{C}_{\rm time}}(\mathbf{\tilde{C}}_{{\rm time}, t}), \boldsymbol{\eta}_{{\rm time}, t}) \\
            &\mathbf{\tilde{C}}^{d}_{\rm space} := \mathbf{h}({Pa}{\tilde{C}_{\rm space}}(\mathbf{\tilde{C}}_{\rm space}), \boldsymbol{\eta}_{\rm space}^{d})
            \end{split}
        \end{equation}
    where the exogenous noise variables $(\boldsymbol{\eta}^{d}_t, \boldsymbol{\eta}_{{\rm time}, t}, \boldsymbol{\eta}_{\rm space}^{d})$ are jointly independent and $\eta^{i,d}_{t}$ are identically distributed across time and space, $\eta^k_{{\rm time},t}$ are identically distributed across time, and $\eta^{l,d}_{\rm space}$ are identically distributed across space. $\operatorname{Pa}_X$ denotes the causal parents within $\mathbf{X}$, and analogously for $\operatorname{Pa}_{\tilde{C}_{\rm time}}$ and $\operatorname{Pa}_{\tilde{C}_{\rm space}}$.
\end{assumption}

Note that we restrict our exploration to the case of acyclic SCMs. However, since the PC algorithm has been proven to be consistent in the presence of cycles \citep{mooij2020constraint}, we expect that the consistency of our method extends straightforwardly to the case of SCMs with contemporaneous cycles.
As discussed above, we allow for latent context variables. Non-stationary data could be modeled by intervening on a temporal context node simultaneously across all time points.

\begin{assumption}[Markov, Faithfulness and partial causal sufficiency]\label{ass:sufficiency}
    Let $M$ be a SCM of the form (\ref{eq:scm}) with graph $\mathcal{G}$, the joint distribution $P_M(X, \tilde{C})$ induced by the SCM satisfies the Markov Property with respect to the graph $\mathcal{G}$.
    Additionally, $P_M(X, \tilde{C})$ is faithful to the graph $\mathcal{G}$ of $M$.
    The collection of datasets from the joint SCM~\eqref{eq:scm} is assumed to contain data from all system nodes $\mathbf{X}_t$, but may contain unobserved context nodes $\mathbf{L}$ and observed context nodes $\mathbf{C}$,
    \begin{equation}
        \tilde{\mathbf{C}}=\mathbf{L} ~\dot{\cup}~ \mathbf{C}.
    \end{equation}
    Subsequently, $\mathbf{L}$ and $\mathbf{C}$ can also be written as a disjoint union over their spatial and temporal components.
\end{assumption}

Note, that in the setup of \eqref{eq:scm} we have included the assumption that context variables are exogenous to the system, see \emph{JCI assumption 1} in \cite{mooij2020joint}.
Furthermore, we assume that no latent context confounds an observed context and a system variable. Since a latent context is the only possible latent confounder in our setup, this assumption is equivalent to \emph{JCI assumption 2} \cite{mooij2020joint}. We make an additional assumption in view of efficiency:

\begin{assumption}[Context-system links]\label{ass:JCI}
    JCI assumption 1 holds.
    Further, a causal link between an observed context and a system variable is not mediated by latent context variables.
\end{assumption}

\begin{assumption}[No context-system confounders]\label{ass:no_conf}
    JCI assumption 2 holds, i.e.\ no latent context confounds an observed context and a system variable.
\end{assumption}

Finally, we adapt the \emph{pseudo-causal sufficiency assumption} of \cite{huang2020causal} for our case as follows:
\begin{assumption}[Context-determinism]\label{ass:asymmetry}
    The spatial context variables are deterministic functions of the dataset index.
    The temporal context variables are deterministic functions of the time-index.
    We assume these functions to be non-invertible.
\end{assumption}

The SCM \eqref{eq:scm} yields a joint time series causal graph $\mathcal{G}$ (see Fig.~\ref{fig:intro}) over all datasets $d$. Note that the spatial context variables appear as a single node in the time series graph. This representation is chosen to denote that they are constant in time and therefore do not have a time-dimension associated with them.
The joint graph $\mathcal{G}$ is related to the target of our discovery task:
\begin{definition}[Target graph]\label{def:target_graph}
   The target graph of J-PCMCI+ is the induced subgraph of $\mathcal{G}$ over the system nodes together with the observed context nodes and their edges to the system nodes.
\end{definition}

\section{Method}\label{sec:method}
The general idea of our method is to include context nodes in the time series graph motivated by the JCI approach. In order to deal with latent context variables, we introduce dummy variables (Sect.~\ref{subsec:def}) before presenting J-PCMCI$^+$ (Sect.~\ref{subsec:alg}) and state consistency results in Sect.~\ref{subsec:theorems}.

\subsection{Dummy variables as proxies for latent confounders}\label{subsec:def}
Let $\mathcal{G}=(V, E)$ be the time series graph corresponding to SCM \eqref{eq:scm}, where $V$ denote the vertices and $E$ edges between vertices. 
Here, $V = \mathbf{X} \cup \mathbf{C} \cup \mathbf{L}$, where $\mathbf{X}$ (resp.~$\mathbf{C}$ and $\mathbf{L}$) refers to vertices at all time points.
The set of edges $E$ can be written as a disjoint union between edges $E_L$, where at least one of the corresponding nodes is in $\mathbf{L}$, and its complement $E_O$, which consists of edges that only connect observed (system and/or context) nodes. That is, $E = E_L \dot{\cup} E_O.$ Further, $E_O$ is itself a disjoint union of $E_{S}$, which are the edges where at least one of the two nodes is in $\mathbf{X}$ and its complement $E_C$, i.e., $E_O = E_S \dot{\cup} E_C$. 
Using this notation, we can express the definition of the target graph (definition \ref{def:target_graph}) more formally: Based on a given ground truth graph $\mathcal{G} = (\mathbf{X} \dot{\cup} \mathbf{C} \dot{\cup} \mathbf{L}, E_L \dot{\cup} E_S \dot{\cup} E_C)$, we define the target graph $\Tilde{\mathcal{G}}$ as $\Tilde{\mathcal{G}} = (\mathbf{X} \dot{\cup} \mathbf{C}, E_S)$. See figure 1 in the SM for an example.

\begin{definition}[Space dummy variable]\label{def:space_dummy}
    The space dummy variable $D_{\rm space} $, henceforth referred to as space dummy, is a variable that labels datasets.
\end{definition}

Without prior expert knowledge, this labelling is arbitrary. 
For instance,  the simplest embedding is $D_{\rm space} \in \{1, \ldots, M\}$.
Alternatively, in a one-hot-encoded embedding, $i \in \{1, \ldots, M\}$ denotes the position of the $1$ in an $M$-dimensional vector where all other entries are $0$.
In the following, we work with a one-hot-encoded space dummy. However, we note that the question of which embedding to choose for the space dummy is far from settled, and requires further expert knowledge about the particular setup and the type of conditional independence to be used in the causal discovery algorithm. Refer to the SM for further details.

Also note, there is no one-to-one relation between datasets and spatial contexts, i.e., two datasets can have same value for a spatial context.

\begin{definition}[Time dummy variable]\label{def:time_dummy}
    The time dummy variable $D_{\rm time}$, henceforth referred to as time dummy, is a variable that labels each time-step in the time-series data. 
\end{definition}

Here too, we arbitrarily choose the embedding for $D_{\rm time} $ to be a one-hot-encoding into a $T$-dimensional vector, where $T$ is the length of the time-series.

\begin{definition}[Dummy projection]\label{def:dummy_proj}
    We define the dummy-projection of the graph $\mathcal{G}=(\mathbf{X} \cup \mathbf{C} \cup \mathbf{L}, E_L \cup E_O)$ to be the graph $\mathcal{G}_D = (\mathbf{X} \cup \mathbf{C} \cup \{D_{\rm space}  \cup D_{\rm time} \}, \tilde{E})$, where edges $\tilde{E}$ are defined as:
    \begin{equation*}
        \begin{split}
            \tilde{E} &= \{ (D_{\rm space} ,v) | (u,v) \in E, \ \forall \  u \in \mathbf{L}_{space} \text{ and } v \in \mathbf{X} \} \\
            \cup & \{ (D_{\rm time} ,v) | (u,v) \in E, \ \forall \ u \in \mathbf{L}_{time} \text{ and } v \in \mathbf{X}_t \}\cup E_S.
        \end{split}
    \end{equation*}
\end{definition}

Further, note that in the dummy projection, we have omitted the edges $E_C$, i.e., the edges between the observed context variables, since these relationships are not of interest for the target graph (definition \ref{def:target_graph}).

Finally, we introduce the dummy-deleted graph. For a visualization of the dummy projection and deletion operations, see figure 3 in SM. 

\begin{definition}[Dummy deletion]\label{def:dummy_del}
    Let $\mathcal{G}_D$ be the dummy projection of the graph $\mathcal{G}$. The dummy-deleted graph $\mathcal{G}_{D_{del}}$ is the graph where the dummy variables and any outgoing edge therefrom is removed.     
\end{definition}

Under assumption \ref{ass:asymmetry}, for $C_{s} \in \mathbf{C}_{space}$ and $L_s \in \mathbf{L}_{space}$, we can always find (not necessarily unique) functions $g_C$ and $g_L$ with $C_s = g_C(D_{\rm space} )$ and $L_s = g_L(D_{\rm space} )$, and analogously for the temporal counterparts.

Such mappings can be assumed to exist, since the dummy $D$ takes a unique value within each dataset (def. \ref{def:space_dummy}), and each spatial context variable $C$ is assumed constant within each dataset (and analogously for the temporal version). Therefore, there exists a mapping $g$ with $C=g(D)$.
There would exist no mapping from the dummy variable to the context variable if the context variable would take two different values within one dataset, but this case is excluded by assumption.

The introduction of the dummy variables into the causal discovery task is for the purpose of removing the confounding effect of latent context variables on a pair of system variables. Latent system variables and their confounding cannot be handled with this method. As we will see in Section \ref{subsec:alg}, our method yields a graph between $\mathbf{X}$, $\mathbf{C}$ and $D$, which is not exactly the dummy projection of the true graph $\mathcal{G}$, but whose dummy deletion is the target graph (definition \ref{def:target_graph}).

\paragraph{Interpretation of the dummy-projection}
Some caution has to be applied in the interpretation of links between the dummy and system nodes in the dummy-projected graph. These links are only placeholders for the links between unobserved context nodes and system nodes in the ground truth graph. The dummy is not a causal variable itself! 
Further, note that in definition \ref{def:dummy_proj}, we did not include links between the dummy and observed context nodes in the dummy projection since these links are deterministic by assumption~\ref{ass:asymmetry} and always present and thus not informative.

\paragraph{Including both observed context and dummy variables}
A natural question at this point might be, why include the observed context variables at all in the causal discovery task, even though the dummy can remove all influence of context variables because of the general way in which is it defined (definitions \ref{def:space_dummy}, \ref{def:time_dummy}). Theoretically, there would indeed be nothing wrong with excluding contexts altogether. However, the dummy variable is not interpretable, and not useful when the goal is to learn the causal relationship between specific context and system variables.
Further, as we will see below, in the first step our causaldiscovery algorithm learns the context-system adjacencies, the relationship between which may be mathematically simpler than those between the highly general dummy and system variables. In the second step, it learns the dummy-system adjacencies given the context parents learnt in the first step. This helps infer influence on the system variable of the dataset label that cannot be explained by the context variables, i.e., in essence we learn hidden contexts. Finally, as also pointed out by \citet{mooij2020joint}, the separate observed context nodes help in orienting adjacencies between system variables. Refer also figure \ref{fig:intro} for a visualization.
 
A naive implementation to learn the causal relationships between context-system and dummy-system, where the context and dummy variables are treated on the same footing when testing adjacencies to the system, would not be correct as the \emph{causal faithfulness assumption} would be violated. This is because the relationship between context and dummy variables is deterministic.
The two-step procedure outlined above circumvents erroneous inferences of adjacencies due to faithfulness violation. For details, see sect.\ref{subsec:alg} and SM.

\subsection{Algorithm}\label{subsec:alg}
In the pooled dataset, we include one variable for the space and time dummy each, as well as for spatial context variables, at time $t$. These nodes can only have contemporaneous links to system nodes since they either do not change over time or contain no information about the temporal structure.
That is, only temporal context variable can have a lagged influence on the system variables, see also figure \ref{fig:intro} and SM (figure 2). 
To be able to deal with observed contexts and dummy variables, that are essentially placeholders for the unobserved context variables, our method first discovers links between system and observed context nodes while ignoring the dummy nodes, and in the next step discovers links between dummy and system nodes. Finally, using the information on the contextual parents of each system node, we do causal discovery on the system node pairs. In the following, we detail this procedure for the non-time series and time-series case. 

\paragraph{Non-time series case:}
To ease the explanations for the time-series case, where both spatial and temporal context variables can occur, we first focus on the non-time-series case, where only spatial context variables can occur. Consequently, we only have to consider the space dummy. Note that assumption \ref{ass:SCM} can be simplified to the non-time series straightforwardly. We will combine the well-known PC algorithm \cite{spirtes2000causation} with both observed and dummy context variables. Pseudocode for this method is provided in Algorithm \ref{alg:nonts}. In the standard setting, the PC algorithm is a constraint-based causal discovery algorithm for the causal sufficient case that relies on the Markov and Faithfulness assumption. In its first stage (skeleton discovery), adjacencies are learned based on iteratively testing conditionally independence of pairs of variables at some significance level $\alpha$. Afterwards, the links are oriented based on a set of rules. We will focus on how its skeleton phase needs to be adapted.
 \begin{enumerate}
     \item In the first step, we discover context-system links. We iteratively test independence between the following node pairs $(X^i, C^j)$, and $(C^j, X^i)$ for all $i,j$ while conditioning on subsets of the union over system and observed context nodes $\mathbf{X} \cup \mathbf{C}$. In other words, we initialize a fully connected graph between the system and context variables, eliminate the edges between context variables, and run the skeleton phase. By ignoring the dummy node in this step, we are able to circumvent the faithfulness violation that stems from the fact that every observed context node is a deterministic function of the dummy. 
     By the exogeneity of the context to the system (assumption \ref{ass:JCI}), we already know that any link between context variable $C$ and system variable $X$ is oriented as $C \rightarrow X$. Therefore, we construct the set of observed contextual parents $\text{Pa}_C(X^i)$ of each system variable $X^i$ from all observed context variables that are found to be adjacent to $X^i$.
    \item In the second step, we focus on the discovery of dummy-system links. In particular, we test independence between $D$ and each $X^i \in \mathbf{X}$ conditional on subsets of $\mathbf{X}$ and the found contextual parents $\text{Pa}_C(X^i)$. Combined with the expert knowledge that the dummy cannot be a descendent of a system variable, this gives us the dummy parents of $X^i$. We denote the set of dummy and contextual parents of $X^i$ by $\text{Pa}_{CD}(X^i)$.
    \item Finally, we run the skeleton phase of the full PC algorithm on $\mathbf{X} \cup \mathbf{C} \cup \{D\}$ while incorporating the background knowledge of the links from $\text{Pa}_{CD}(X^i)$ to $X^i$, and no context-context and context-dummy links.

 \end{enumerate}

Since context-system and dummy-system links are oriented by assumption, the orientation phase, see \cite{meek1995causal} for rules, needs to be applied to orient the system variables only. Note, however that we are taking between context-system or dummy-system edges into account whenever they form an unshielded triple with two system variables, i.e.\, $C \rightarrow X^i \oo X^j$ or $D \rightarrow X^i \oo X^j$. 
This allows to orient more edges than only considering triples of system variables.

\RestyleAlgo{ruled}
\begin{algorithm}
\caption{J-PC (for non-time series), pseudocode for poolData, and partialSkeletonPC is provided in SM}\label{alg:nonts}
\KwData{Background knowledge on context-system link orientation $\mathcal{E}$, observational data $(\mathbf{X}^{(m)})_{m=1, \ldots, M}$ in $M$ dataset, observed context variables $(\mathbf{C}^{(m)})_{m=1, \ldots, M}$ for each dataset, dummy variable $D$ with distinct values for each dataset, significance level $\alpha$}
\KwResult{graph $\mathcal{G}$}
$(\mathbf{X}, \mathbf{C}, D) \leftarrow \operatorname{poolData}((\mathbf{X}^{(m)}, \mathbf{C}^{(m)})_{m=1, \ldots, M}, D)$ \\
Set $\mathcal{P}_C := \{ (X, C), (C, X) | X \in \mathbf{X}, C \in \mathbf{C} \}$,\\ $\mathcal{P}_D := \{ (X, D) | X \in \mathbf{X}\}, \mathcal{P}_S := \{ (X, Y) | X, Y \in \mathbf{X} \}$ \\
Set $data_C := (\mathbf{X}, \mathbf{C})$ and $data_D := (\mathbf{X}, \mathbf{C}, D) $ \\
Set $\mathcal{C} = \emptyset$ \\
\For{index in [C, D]}{
    $\mathcal{G} \leftarrow \operatorname{partialSkeletonPC}(data_\text{index}, \alpha, \mathcal{P}_\text{index}, \mathcal{C})$ \\
    \For{$X$ in $\mathbf{X}$}{
        Orient context-system edge as in $\mathcal{E}$\\
        Add contextual parents $\text{Pa}_\text{index}(X)$ as in $\mathcal{G}$ to $\mathcal{C}$
    }
}
$\mathcal{G} \leftarrow \operatorname{partialSkeletonPC}(data_D, \alpha, \mathcal{P}_S, \mathcal{C})$ \\
Orient system edges using PC-orientation rules\\
\textbf{return} $\mathcal{G}$
\end{algorithm}

\paragraph{Time series case:}
\RestyleAlgo{ruled}
\begin{algorithm*}
\caption{J-PCMCI$^+$ (for time series), 
laggedSkeletonPCMCI$^+$ refers to Algorithm 1 in \cite{runge2020discovering}, partialContempSkeletonPCMCI$^+$ is a small adaption of Algorithm 2 of \cite{runge2020discovering} which is further described in the SM, colliderPhase and rulePhase refer to Algorithms 3 and 4 in \cite{runge2020discovering}}\label{alg:ts}
\KwData{Background knowledge on context-system link orientation $\mathcal{E}$, $M$ observational system time series datasets $\textbf{X}=(\textbf{X}^i)_{i \in \mathcal{I}}$, observed temporal context variables $\mathbf{C}_\text{time}$, observed spatial context variables $\mathbf{C}_\text{space}$, temporal and spatial dummy variables $ D_\text{time}, D_\text{space}$, significance level $\alpha$, maximal time lag $\tau_\text{max}$, $CI(X, Y, \mathbf{Z})$}
\KwResult{graph $\mathcal{G}$}
$\{\hat{\mathcal{B}}^-_t(X) | X \in \mathbf{X} \cup \mathbf{C}_\text{time} \} \leftarrow \text{laggedSkeletonPCMCI$^+$}(\mathbf{X} \cup \mathbf{C}_\text{time})$ \red{} \\
Initialize $\mathcal{C}(X) \leftarrow \emptyset$ for all $X \in \mathbf{X}$\\
Set 
$\mathcal{P}_C := \{ (C^k_{t-\tau}, X^i_t), (X_t^i, C_t^k) | \tau \geq 0, \forall i,k \}$, $\mathcal{P}_D := \{ (X_t^i, D), (D, X_t^i) | \text{for } D \in \{ D_\text{time}, D_\text{space}  \}, ~ \forall i \}$,  $\mathcal{P}_S = \{ ((X^j_{t-\tau}, X_t^i))_{\tau > 0}, (X_t^i, X_t^j) | i,j\}$\\
Set $data_C := \mathbf{X} \cup \mathbf{C}_\text{time} \cup \mathbf{C}_\text{space}$, and $data_D := \mathbf{X} \cup \mathbf{C}_\text{time} \cup \mathbf{C}_\text{space} \cup \{ D_\text{time}, D_\text{space}  \}$\\
\For{index in [C, D]}{
    $\mathcal{G}$ $\leftarrow$ partialContempSkeletonPCMCI$^+$($data_\text{index}$, $|\mathcal{I}|$, $\tau_\text{max}$, $\alpha$, $\hat{\mathcal{B}}^-_t(X)$, $\mathcal{C}(X)$, $\mathcal{P}_\text{index}$)\\
    \For{$X$ in $\mathbf{X}$}{
    orient context-system edge as in $\mathcal{E}$\\
    add all context nodes that are adjacent to $X$ in $\mathcal{G}$ to $\mathcal{C}(X)$\\}   
}

$\mathcal{G}$ $\leftarrow$ partialContempSkeletonPCMCI$^+$($data_D$, $|\mathcal{I}|$, $\tau_\text{max}$, $\alpha$,  $\hat{\mathcal{B}}^-_t(X)$, $\mathcal{C}$ $\mathcal{P}_S$)\\
$\mathcal{G},$ sepset, ambigious triples, conflicting links $\leftarrow$ colliderPhase($data_D$), i.e., on all unshielded triples $X_{t-\tau}^i \rightarrow X_{t}^k \oo X_{t}^j $ ($\tau > 0$) or $X_{t}^i \oo X_{t}^k \oo X_{t}^j $ or $K \rightarrow X_t^i \oo X_t^j$ with $K \in \mathbf{C}_\text{time} \cup \mathbf{C}_\text{space} \cup \{ D_\text{time}, D_\text{space} \}$\\
$\mathcal{G}$, conflicting links $\leftarrow$ rulePhase($\mathcal{G},$ ambigious triples, conflicting links)\\
\textbf{return} $\mathcal{G}$
\end{algorithm*}

Next, we combine the PCMCI$^+$ algorithm \cite{runge2020discovering} with observed context and dummy  variables. To recall briefly, PCMCI$^+$ is a causal discovery algorithm for time series data, that allows for both contemporaneous and lagged links and assumes causal sufficiency. It consists of two steps. The first \emph{$\text{PC}_1$ lagged phase} infers a superset of the lagged parents together with the parents of contemporaneous ancestors. Next, the \emph{MCI contemporaneous phase} starts with links found in the previous step and all possible contemporaneous links, it then conducts momentary conditional independence (MCI) with a modified conditioning set learned in the previous step to increase detection power. 

Our method consists of four main steps: one $\text{PC}_1$ lagged phase and three MCI phases. In the first step, supersets of the lagged parents of the system and observed temporal context nodes are discovered by running the $\text{PC}_1$ lagged phase on this subset of nodes.
Next, the MCI test is run on pairs of system and context nodes conditional on subsets of system and context, i.e.\ perform MCI tests for pairs $((C^j_{t-\tau}, X^i_t))_{\tau > 0}$,  $(C_t^j, X_t^i)$, $(X_t^i, C_t^j)$ for all $i,j$, 
    \[
        C_{t-\tau}^i \indep X_t^j | \mathbf{S}, \hat{\mathcal{B}}^-_t(X_t^j)  \setminus \{ C_{t-\tau}^i \}, \hat{\mathcal{B}}^-_{t-\tau}(C_{t-\tau}^i)
    \]
with $\mathbf{S}$ being a subset of the contemporaneous adjacencies $\mathcal{A}_t(X_t^j)$ and $\hat{\mathcal{B}}^-_t(X_t^j)$ are the lagged adjacencies from step one. If $C$ is a spatial context variable, we only have to test the contemporaneous pairs $(C_t^j, X_t^i)$, $(X_t^i, C_t^j)$ for all $i,j$. If $C_t^j$ and $X_t^i$ are conditionally independent, all lagged links between $C_t^j$ and $X^j_{t-\tau}$ are also removed for all $\tau$.
In the third step, MCI tests on all system-dummy pairs conditional on the superset of lagged links, the discovered contemporaneous context adjacencies, as well as on subsets of contemporaneous system links, are performed, i.e.\ test for $(D, X_t^i)$, $(X_t^i, D)$ for all $i$, i.e.\ 
    \[
    D \indep X_t^j | \mathbf{S}, \hat{\mathcal{B}}^C_t(X_t^j)
    \]
    where $\mathbf{S} \subset \mathcal{A}_t(X_t^i)$ and $\hat{\mathcal{B}}^C_t(X_t^j)$ are the lagged and contextual adjacencies found in the previous step.
If $D$ and $X_t^j$ are found to be conditionally independence, links between $D$ and $X^j_{t-\tau}$ are removed for all $\tau$.
Using assumption \ref{ass:JCI}, context node is the parent in all system-context links.
Finally, in the fourth step, we perform  MCI tests on all system pairs conditional on discovered lagged, context and dummy adjacencies, as well as on subsets of contemporaneous system links and orientation phase. In more detail, we perform MCI test for pairs $((X^j_{t-\tau}, X_t^i))_{\tau > 0}$, $(X_t^i, X_t^j)$ for all $i, j$, i.e.\ 
    \[
     X^i_{t-\tau} \indep X_t^j | \mathbf{S}, \hat{\mathcal{B}}^{CD}_t(X_t^j)  \setminus \{ X_{t-\tau}^i \},  \hat{\mathcal{B}}^{CD}_t(X_{t-\tau}^i) 
    \]
    where $\mathbf{S} \subset \mathcal{A}_t(X_t^i)$ and $\hat{\mathcal{B}}^{CD}_t(X_t^j)$ are the lagged, contextual, and dummy adjacencies found in the previous steps.
Finally, all remaining edges (without expert knowledge) are oriented using the PCMCI$^+$ orientation phase while making use of all triples involving one context or dummy variable and two system variables as in the non-time series case.

\subsection{Theoretical Results}\label{subsec:theorems}
Proofs for the following statements are provided in SM.

\begin{theorem}[Non-time series consistency result] \label{thm:nonts}
    Denote the output of J-PC (Algorithm \ref{alg:nonts}) as $\mathcal{G}_{alg}$.
    Under the assumptions \ref{ass:sufficiency},\ref{ass:JCI}, \ref{ass:no_conf},  \ref{ass:asymmetry}, and assuming consistent conditional independence tests are used, the dummy deletion of $\mathcal{G}_{alg}$ corresponds to the dummy-deleted ground truth graph as the number of data sets $M$ tends to infinity.
\end{theorem}

Note that here the dummy-deleted ground truth graph is the target graph (definition \ref{def:target_graph}) adapted to the non-time series case.

\begin{theorem}[Time series consistency result] \label{thm:ts}
    Denote the time series graph output of J-PCMCI$^+$ (Algorithm \ref{alg:ts}) as $\mathcal{G}_{alg}$.
    Under assumptions \ref{ass:sufficiency}, \ref{ass:JCI}, \ref{ass:no_conf}, \ref{ass:asymmetry}, and assuming consistent conditional independence tests are used, the dummy deletion of $\mathcal{G}_{alg}$ corresponds to the target graph (definition \ref{def:target_graph}) as the number of data sets $M$ and the number of times steps $T$ tend to infinity.
\end{theorem}

The following consequence of theorem \ref{thm:ts} allows us to relax the rather strong assumption that latent context variables cannot mediate or confound an observed context variable and a system variable.

\begin{corollary}
    If some of the observed context variables are treated as unobserved, and the assumptions \ref{ass:sufficiency}-\ref{ass:asymmetry} still hold, our method J-PCMCI$^+$ will recover the correct system-system adjacencies.
\end{corollary}

In particular, even if all context variables are treated as unobserved, our algorithm yields the correct induced graph over the system variables.

\section{Numerical Experiments}\label{sec:num}
\paragraph{Data simulation}
We generate toy data from the SCM \ref{eq:scm} where we assume the functions $f_i, g_k$, and $h_l$ to be linear. We also evaluate the method on data where the mechanisms $f_i$ are nonlinear. For a more detailed description of this setup, see the SM. In particular, in the linear setting, for system variables $\mathbf{X}=\{X^i\}_{i\in \mathcal{I}}$ and temporal context variables $(C^{\text{time},k})_{k \in \mathcal{K}\text{time}}$ and spatial context variables $(C^{\text{space},l})_{l \in \mathcal{K}\text{space}}$, we consider the following ground truth SCM $ X^{i,m}_t = a_i X^{i,m}_{t-1} + \sum_{j} b_j X^{j,m}_{t-\tau_j} + \sum_{j} c_jC^{\text{time},j}_{t-\tau_j} + \sum_{j} d_jC^{j, m}_\text{space} + \eta^{i,m}_t$,
where $i \in \mathcal{I}$, $t= 1, \ldots, T$, and $m=1, \ldots, M$, $C^{\text{time},k} \sim \mathcal{N}(0,1)$, $C^{\text{space}} \sim \mathcal{N}(0,1)$. Furthermore, $\eta^i \sim \mathcal{N}(0,1)$ i.i.d., $a_i$ autocorrelation parameter uniformly drawn from $[0.3, 0.8]$, coefficients $b_j,c_j, d_j$ are uniformly drawn from $[0.5, 0.9]$, $50\%$ of the links are contemporaneous, the remaining lags are drawn uniformly from $[1,3]$. After the data has been generated, its variance is rescaled to one across all datasets to avoid varsortability \cite{reisach2021beware}. In the numerical experiments, we make the restriction that one system node can have at most one contextual parent.
After the time series for the ground truth model has been generated a certain fraction (indicated by parameter frac\_observed) of the context nodes is selected to be observed, the others are unobserved.
Note that in our simulated data all context nodes are exogeneous (they do not even have other context nodes as parents). We decided to set the experiments up in this way to put the focus on the discovery of the system-context links and also the deconfounding property of the dummy nodes.

\paragraph{Setup}
We evaluate the performance of our method using True (TPR) and False Positive Rate (FPR) for the adjacencies which is calculated only on the system-system context links. Separately, we report the TPR and FPR on system-observed context links. All metrics and their standard deviations are computed on the estimated graphs of $50$ realizations of the model from time series with length $T$.

We compare our method to PCMCI$^+$ run on the data of the system nodes only by simply concatenating the data, as well as to PCMCI$^+$ where we have only included the observed (context and system) nodes. In this variant, we took care to include spatial context nodes only once in the time series graph. We build upon the implementation of PCMCI$^+$ algorithm within the Tigramite software package \citep{runge2019detecting} published under the GNU General Public License.

 We used the following model parameters in our experiments: Number of system nodes $|\mathcal{I}|=5$, number of context nodes $|\mathcal{K}_\text{time}| + |\mathcal{K}_\text{space}| = 3$ 
 , maximal time lag $\tau_{\operatorname{max}}=2$, significance level $\alpha=0.05$. We use an extension of the (component-wise) partial correlation conditional independence test.
We vary the value of the time sample size $T$, and number of datasets $M$. Results for other fractions of observed context nodes can be found in the SM.

\paragraph{Benefit of including expressive context nodes and dummy}
In figure \ref{fig:results_cnt}, we report the TPR and FPR for observed context-system links, respectively. We see that our method finds the links between all observed context nodes and the system nodes as well as the PCMCI$^+$ variant with observed context nodes included in the graph. In figure \ref{fig:results}, we observe that the performance of our method is comparable to only including dummy variables when evaluated only on the system-system links. These two observations illustrate the benefit of using our method: If we would only rely on the dummy context variables, we would achieve deconfounding of the system variables but, naturally, we would not find any of the links between the observed context nodes and the system. In other words, we would not be able to learn which parts of the system are dependent on the properties of the context. On the other hand, if we would only include the observed context nodes, we would not be able to remove the confounding that could be introduced by the latent context variables. The effects of the counfounding latent context variables, are visible in the rise in FPR on system-system links in figure \ref{fig:results}. We also see lower edgemark recall and precision when only using system data, see the figures in the SM.

\begin{figure}
    \centering
    \includegraphics[width=\linewidth]{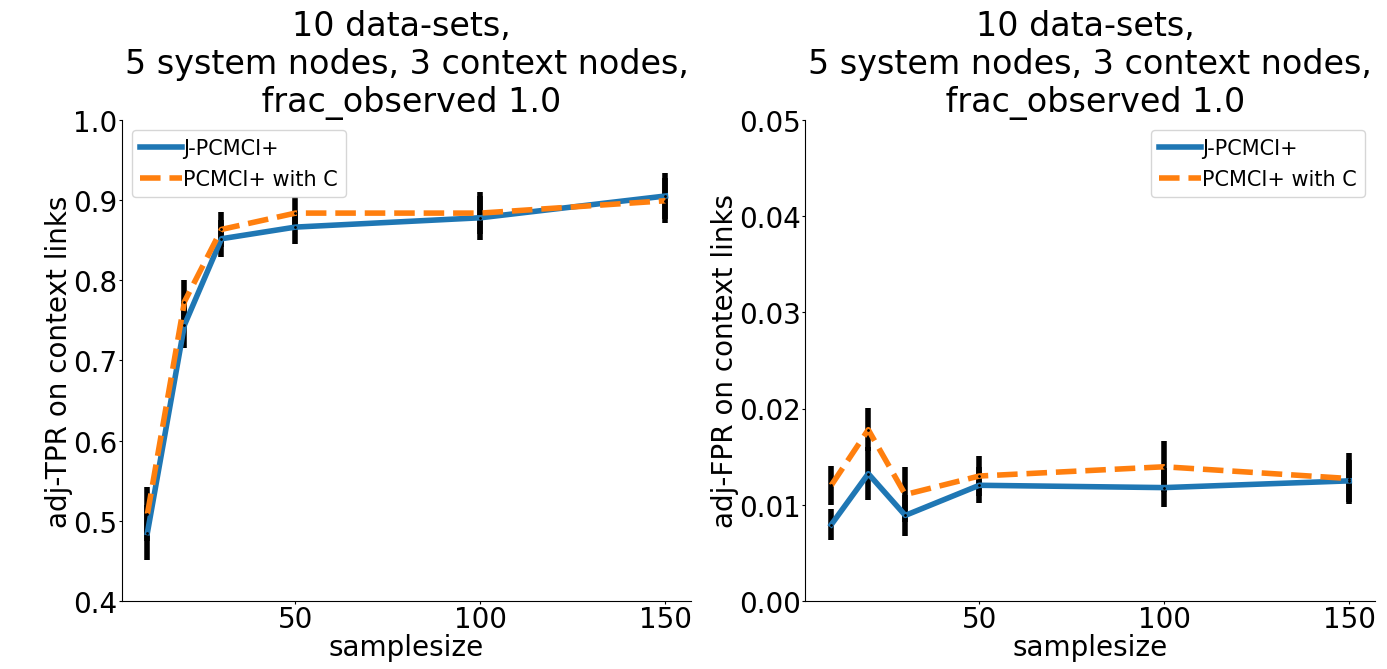}
    \caption{Discovery results of context-system links for varying sample sizes $T$, $M=10$. All other setup parameters are set as the defaults described in the main text. In this setting all context nodes are observed. Here, we compare our method (J-PCMCI$^+$) to PCMCI$^+$ using all data of observed nodes. Note that the maximal TPR that can be reached is equal to frac\_observed.}
    \label{fig:results_cnt}
\end{figure}

\paragraph{Convergence analysis (time and space dimension)}
We want to numerically study the finite sample properties of our method. For that, we look at the TPR and FPR on the system links for varying time sample size $T$ while keeping the number of spatial contexts $M$ fixed, and the other way around, see figure \ref{fig:results}. In these experiments half of the context nodes are unobserved. In the SM, we have also included 3D-plots for TPR and FPR on all pairs of $T$ and $M$.

\begin{figure}[htb!]
    \centering
    \includegraphics[width=\linewidth]{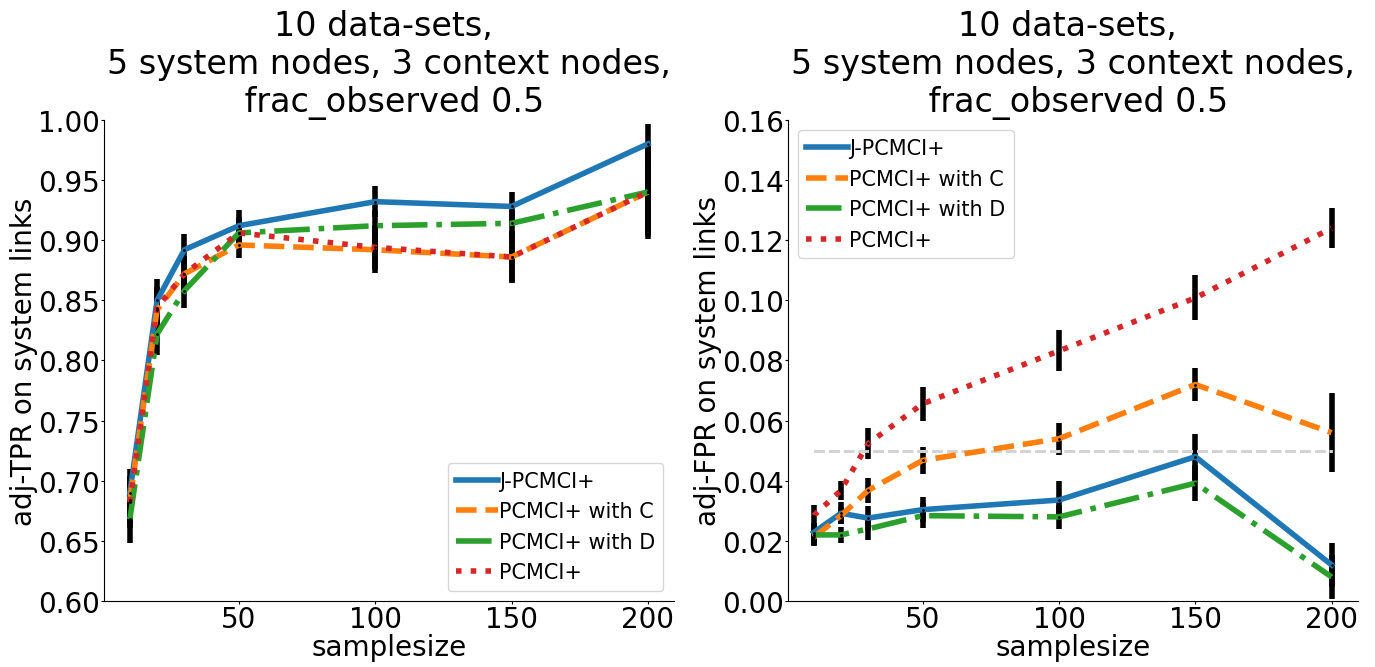}
    \includegraphics[width=\linewidth]{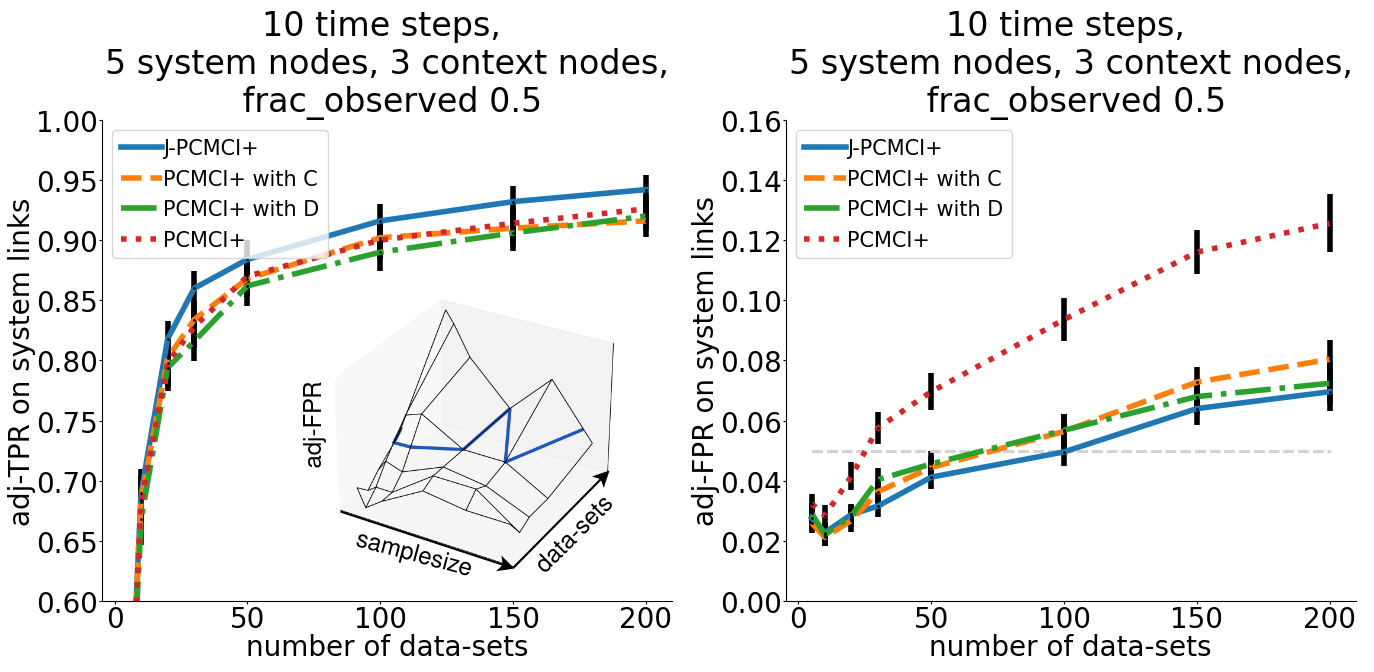}
    \caption{Discovery results of system-system links for varying sample sizes $T$, and fixed $M=10$ (top row), and varying number of contexts $M$, and fixed $T=10$ (bottom row). All other setup parameters are set as the defaults described in the main text. In this setting half of the context nodes are observed. We compare our method (J-PCMCI$^+$) to PCMCI$^+$ using all data of observed nodes (PCMCI$^+$ with C), using all data of system variables and including dummies (PCMCI$^+$ with D), and only using data of system variables (PCMCI$^+$). 
    The inset shows the adj-FPR-surface with the contour of the $\alpha$-level in a simplified experimental setup to visualize the convergence of the method. Refer to the SM for details. 
    }
    \label{fig:results}
\end{figure}
It is a well established result in the econometrics literature, that, when considering fixed effects models, a bias is introduced in the OLS estimator of the slope parameter \cite{nickell1981biases}. A similar inconsistency problem can be observed in our method whenever $T$ is kept fixed and is small compared to $M$: Even when the number of spatial contexts $M$ goes to infinity, the links are not discovered correctly, see figure \ref{fig:results} (in particular the FPR plots) and the SM. 

\section{Discussion and Conclusions}\label{sec:discussion}

We presented an algorithm (J-PCMCI$^+$) for causal discovery from a collection of multivariate time series datasets that is able to deal with observed and unobserved context variables underlying the datasets.  We established its asymptotic consistency and studied its convergence properties numerically. 

The \textbf{main strengths} of J-PCMCI$^+$ are that it combines the efficient algorithm from \citet{runge2020discovering} (handling highly autocorrelated time series) with ideas in \citet{mooij2020joint,huang2020causal} to model observed as well as unobserved contexts. Pooling data from multiple datasets and adding observed contexts as well as dummies has several important benefits: (1) pooling increases sample size, (2) adding observed contexts and dummies allows to learn context-system relations and can help to orient  system-system links, (3) dummy variables allow to remove confounding at least from latent context variables.  J-PCMCI$^+$ inherits the benefits of PCMCI$^+$ for high-recall and accounting for autocorrelation in the conditional independence tests. We find numerically that J-PCMCI$^+$ has good performance for sufficiently large sample sizes and moderate numbers of datasets.

The \textbf{main weaknesses} are that some assumptions might be strong and unrealistic. In particular, JCI assumption 2 and the assumption that prohibits latent context mediation of the observed context to system link can be hard to justify depending on the setup, see Assumption \ref{ass:JCI}. In the SM, we discuss ramifications of partially relaxing this assumption. Our numerical experiments indicate that for too small sample sizes, we get inflated false positives due to missing latent confounders. We also cannot overcome the fundamental finite-sample bias in the OLS estimator~\citep{nickell1981biases} for small sample sizes and dummy variables carry the disavantage of increasing dimensionality.

In \textbf{future work}, we plan to extend the method to weaken the partially strong assumptions and to account for latent system variable confounding \cite{gerhardus2020high}. Furthermore, while our method only uses high-dimensional dummy variables where it finds dependence, we haven't looked yet into adapting CI tests specifically for dummy variables. In this work, we also only considered acyclic SCMs. However, an extension to include the cyclic case is possible based on the consistency of the PC algorithm in the cyclic setting \cite{mooij2020constraint}. Moreover, an in-depth analysis of finite sample properties of the presented method is needed.

\begin{acknowledgements}
W.G. was supported by the Helmholtz AI project CausalFlood (grant no. ZT-I-PF-5-11). 
J.R. and U.N.  received funding from the European Research Council (ERC) Starting Grant CausalEarth under the European Union’s Horizon 2020 research and innovation program (Grant Agreement No. 948112).
This work used resources of the Deutsches Klimarechenzentrum (DKRZ) granted by its Scientific Steering Committee (WLA) under project ID bd1083.

We thank the anonymous reviewers for their helpful comments.  
\end{acknowledgements}

\bibliography{gunther_390}
\end{document}


\onecolumn 
\maketitle

In this Supplementary Material, we provide some subtleties of the target graph of J-PCMCI$^+$, the dummy-projection and deletion operations, the embedding and representation of the dummy variable and how to relax some of the assumptions. Furthermore, we give background on the challenge of determinism within causal discovery, illustration on how context nodes can help with orienting additional edges, proofs for the main theorems, additional pseudocode and details on the simplified experimental setup as well as additional plots for the numerical experiments.

\appendix

\section{Code}
The code to reproduce the experimental results can be found under the following url \url{https://github.com/guenwi/J-PCMCIplus}. The method (J-PCMCI$^+$) will also be made available as part of the tigramite package (\url{https://github.com/jakobrunge/tigramite}).

\section{More on dummy-projection and -deletion}

\subsection{The target graph of J-PCMCI+}

 We define the "target graph" as our ultimate object of interest, which is the causal graph between the system nodes. 
By extension, it is implied that the links between the context nodes as well as those between the dummy and system nodes aren't of interest, the latter additionally so because the dummy variable is not a causal variable. 
To make this more tangible, we provide additional illustration of the target graph in relation to the dummy-projection and dummy-deleted version of the ground truth graph in figure \ref{fig:target_graph}.

 \begin{figure}
     \centering
     \includegraphics[width=9cm]{./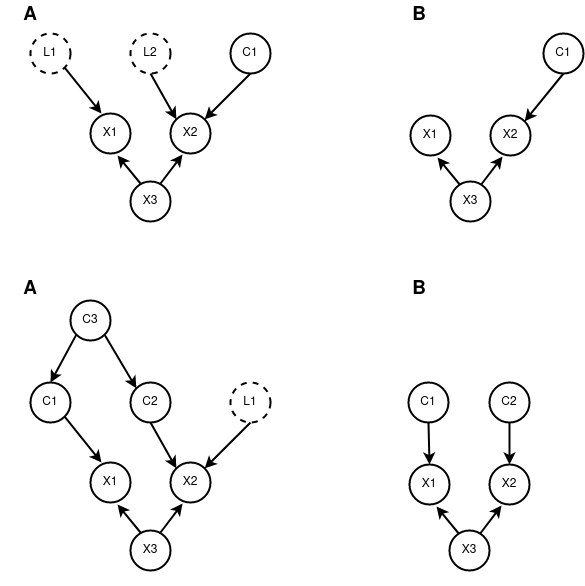}
     \caption{Visualization of the summary graphs of SCM~(1) (A), as well as the corresponding target graph (B). The node $D$ in the dummy projected graph can be either the time dummy $D_\text{time}$ or the space dummy $D_\text{space}$. Latent context nodes are visualized using dashed circles, dashed arrows denote deterministic dependencies (not part of dummy projection).}
     \label{fig:target_graph}
 \end{figure}

\subsection{Dummy confounding}
A misleading fact about the dummy projection, and also of $\mathcal{G}_{alg}$, which is the result of algorithm J-PCMCI$^+$ (J-PC, respectively),  is that it can contain system variables confounded by the dummy, that do not correspond to actual latent confounding, see figure \ref{fig:dummy_proj} for a visualization of one such case. However, as we prove in Section 4.3, this is not a concern because we are interested in the true causal graph over the system variables together with edges from context to system variables, and for this task conditioning on such a dummy that isn't a true confounder doesn't lead to wrong inferences.

Furthermore, note that we include the time-dummy $D_\text{time}$ only once into the time series graph. Since the time-dummy does not contain information on the specific value of the unobserved context variables but only encodes expert knowledge on their structure, we are not able to discover at which specific lag the causal relationship between the latent temporal context variables and the system variables occurs. However, we are able to find whether the system variables are influenced by a temporal context variable or not.
See figure \ref{fig:ts_graph} for a visualization.

 \begin{figure}
     \centering
     \includegraphics[width=0.4\linewidth]{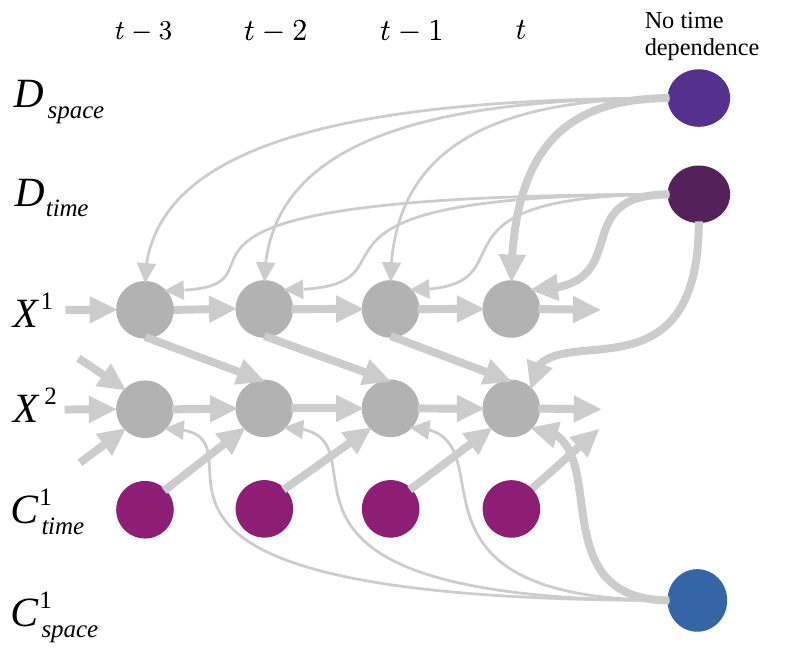}
     \caption{Unrolled time series graph with system variables $X^1$, $X^2$, an observed temporal context variable $C^1_{time}$ with (possibly lagged) links to system variables, an observed spatial context variable $C^1_{space}$ as a single node as it's constant over time,
     space and time dummy $D_{\rm space} $ and $D_{\rm time} $ as single nodes.
     Note that this graph omits links between context variables and between dummy and context variables.
     For better readability, we used thinner arrows for lagged links from the context and dummy variables to the system variables.}
     \label{fig:ts_graph}
 \end{figure}

\subsection{Embedding of the dummy variables}
Here, we provide further detail on how to represent or encode the dummy variable.
The choice of embedding matters most for testing conditional independence between dummy and system variables.
Potential choices of embeddings include one-hot encoding, which we use, or using the integers that denote the time or data set index directly (as done in \citep{huang2020causal}), among others.

In CD-NOD \citep{huang2020causal}, the auxiliary variable corresponds to the domain or time index, i.e., the dummy takes values in $\{1, \ldots, n_C\}$, where $n_C$ is the number of contexts. 
To then be able to test for marginal and conditional independence between a system variable $X$ and the dummy $D$, \citet{huang2020causal} employ the KCI test \citep{zhang2011kernel} since the functional relationship between $D$ and $X$ is highly non-linear. In case of non-stationary data, they also assume that all temporal context variables are a smooth function of the time-dummy, thus they also need to keep the time order in their embedding of the dummy (which we do not do).
When using a one-hot encoded dummy in combination with a partial correlation test for testing whether a system variable depends on the context, we are essentially testing for differences in mean of that system variable as the context changes, since the partial correlation coefficient reduces to the point-biserial correlation coefficient \cite{sheskin2020handbook} if one of the variables is dichotomous. If we would adapt the CI test, the same could be achieved with an integer-embedding of the dummy.

Furthermore, the choice of embedding also has implications on how easy it is to regress out context information from the system variables when testing system-system adjacencies conditional on the dummy. Using the one-hot encoding of the dummy values, we are centering the system data within each dataset or across time. This is very related to the well-established technique of fixed effects panel regression.

\subsection{Relaxing the No-Mediation Assumption} 
It is possible to relax the no-mediation assumption, which is part of Assumption 2. However, this will result in the context-system links as discovered by J-PCMCI$^+$ representing ancestral rather than direct causal relationships. A consequence of that is that the graph $\mathcal{G}_{alg}$ that corresponds to the (time-series) graph resulting from algorithm J-PCMCI$^+$ is not identical to the dummy-projected graph $\mathcal{G}_D$ of the ground truth graph $\mathcal{G}$. In particular $\mathcal{G}_{alg}$ will have more system-context links than those in $\mathcal{G}_D$. Such link could appear because of an observed context variable that is indeed a parent of the system variable, i.e., links that also appear in $\mathcal{G}_D$. However, it could also happen that a latent context $L$ is a mediator between an observed context variable and a system variable. Since we cannot condition on the dummy in the first step of the algorithm (due to the deterministic relationship between dummy and context), the algorithm will not remove this link. However, it is not contained in the dummy projected ground truth graph because there is no actual link to system from the observed context variable.

Consequently, the consistency theorem \ref{thm:ts} does no longer hold since it relies on the dummy-projection of the ground truth graph which does not include ancestral links between context and system variables that are mediated by a latent context.
The theorem would be adapted to a new definition of the dummy-projection that includes such links.

\begin{figure}
    \centering
    \includegraphics[width=13cm]{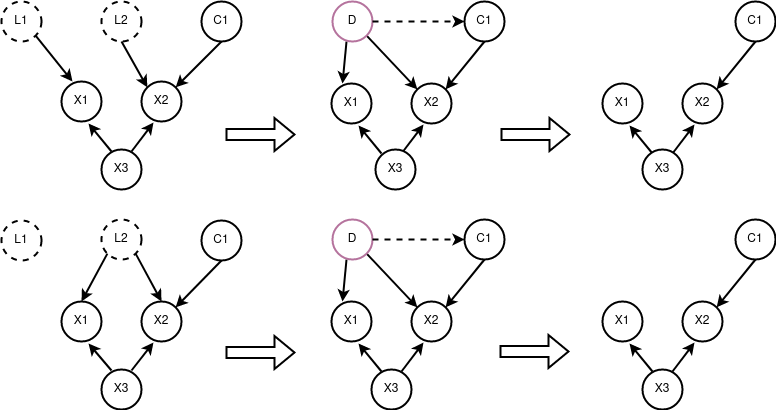}
    \caption{Visualization of the dummy projection operator (middle) and the dummy deletion (right) on summary graphs of SCM~(1). The node $D$ can be either the time dummy $D_\text{time}$ or the space dummy $D_\text{space}$. Latent context nodes are visualized using dashed circles, dashed arrows denote deterministic dependencies (not part of dummy projection). The first row indicates that a dummy confounder does not correspond to a real latent confounder, while in the second row it does.}
    \label{fig:dummy_proj}
\end{figure}

\section{Faithfulness violation due to determinism}

Causal inference on data containing deterministic relationships and the challenges thereof have been dealt with previously, e.g.\ \citep{daniusis2012inferring}, \citep{lemeire2011inferring}. 
Let us look at an example which was taken from \citet{lemeire2012conservative} to illustrate the challenge that is introduced by deterministic relationships.

Let $ X \rightarrow Y \rightarrow Z$ be the ground truth causal graph over the variables $X$, $Y$, $Z$, where there is a deterministic relation between $Y$ and $X$, i.e.\ there exists a function $f(\cdot)$ s.t. $Y = f (X)$. Therefore, $Y \indep Z | X$ since $X$ contains all information about $Y$.

This illustrates that deterministic relations generate additional independencies beyond those implied by the Markov condition. In other words, the true DAG is not faithful to the joint probability distribution of $X, Y, Z$. To describe these additional independencies, the D-separation criterion has been introduced \citep{geiger1990identifying}.
It is worth noting that the PC algorithm will not be sound when deterministic relationships are present. In the above example, the algorithm will remove both the edge between $Y$ and $Z$, and also between $X$ and $Z$.
This will happen because the conditional independence $Y \indep Z | X$ (wrongly) suggests that $X$ separates $Y$ and $Z$. On the other hand, $X \indep Z | Y$ due to d-separation.

\section{Context nodes help in orienting edges}
We quickly recap how context variables help in orienting additional system-system links. We consider the situation where $C \rightarrow X - Y$ and there is no edge between $C$ and $Y$. Then $C \rightarrow X - Y$ forms an unshielded triple. Then we can use standard collider orientation rules to orient the edge between $X$ and $Y$. In more detail:
\begin{enumerate}
    \item[(i)] If $Y$ and $C$ are independent given a set of variables that does not include $X$, then the triple is a V-structure, and we have $X \leftarrow Y$.
    \item[(ii)] Otherwise, if $Y$ and $C$ are independent given a set of variables including $X$, then we have $X \rightarrow Y$.
\end{enumerate}
See figure \ref{fig:orientations} for a visualization.

\begin{figure}
    \centering
    \includegraphics[width=7cm]{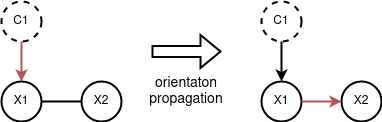}
    \caption{Visualization of how the context variables help orient additional edges by making use of the assumption that context nodes are exogenous to the system (left) and the standard rules of orientation propagation (right).}
    \label{fig:orientations}
\end{figure}

\section{Proofs}
\subsection{Proof of Theorem \ref{thm:nonts}}
Before we get to the proof of Theorem \ref{thm:nonts}, we note that the following useful lemma holds for the non-time series case.
\begin{lemma} \label{eq:dummy_cond}
For two system variables $X$ and $Y$, it holds for any $S \subset \mathbf{X}$
\begin{equation*} 
    X \indep Y | S \cup \{ D \} \quad \iff \quad X \indep Y | S \cup \mathbf{C} \cup \mathbf{L}.
\end{equation*}
For a definition of the sets $\mathbf{X}$, $\mathbf{C}$ and $\mathbf{L}$, please refer to the main text. Note that in the non-time series case, there is no time dummy, therefore $D := D_{\rm space} $.
\end{lemma}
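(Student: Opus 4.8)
The statement is, at heart, a claim about \emph{information}: conditioning on $S\cup\{D\}$ and conditioning on $S\cup\mathbf C\cup\mathbf L$ ought to be interchangeable when deciding independence of $X$ and $Y$. My plan is to make this precise by (i) showing that the dummy $D$ and the tuple of all context variables $\mathbf C\cup\mathbf L$ determine one another, so that the two conditioning sets generate the same $\sigma$-algebra, and (ii) appealing to the elementary fact that conditional independence of $X$ and $Y$ depends on the conditioning set only through the generated $\sigma$-algebra (equivalently, through the induced conditional distributions).

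For step (i), I would recall from the construction of the dummy in the non-time-series setting that $D=D_{\rm space}$ is one-hot encoded over the $n_C$ datasets, and that, by the way context variables enter SCM~(1), every variable in $\mathbf C\cup\mathbf L$ is constant across the samples of a given dataset. Hence each context variable is a (piecewise-constant, measurable) function of $D$, so $\sigma(S\cup\mathbf C\cup\mathbf L)\subseteq\sigma(S\cup\{D\})$. For the reverse inclusion one needs that $D$ is recoverable from $\mathbf C\cup\mathbf L$: two datasets agreeing on all of $\mathbf C\cup\mathbf L$ would be the same context, so the context profiles are pairwise distinct and $D$ is a function of $\mathbf C\cup\mathbf L$. (Alternatively, one can bypass this and instead use that the dummy is exogenous and the system depends on the contexts only through their values, so $(X,Y)\indep D\mid S\cup\mathbf C\cup\mathbf L$; then adjoining or removing $D$ to/from $S\cup\mathbf C\cup\mathbf L$ changes no conditional independence between $X$ and $Y$, and combining with the first inclusion again closes the argument.)

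For step (ii), with $\sigma(S\cup\{D\})=\sigma(S\cup\mathbf C\cup\mathbf L)$ in hand, the equivalence is immediate from the standard lemma that $A\indep B\mid\mathcal F_1\iff A\indep B\mid\mathcal F_2$ whenever $\mathcal F_1=\mathcal F_2$. Concretely, writing the pooled law as the size-weighted mixture $\sum_k w_k\,p_k$ over the $n_C$ datasets (weights $w_k$ proportional to sample sizes), conditioning on $D=e_k$ picks out dataset $k$, in which $\mathbf C$ and $\mathbf L$ equal the fixed values $c^{(k)},\ell^{(k)}$; thus $p(x,y\mid S,D{=}e_k)=p(x,y\mid S,\mathbf C{=}c^{(k)},\mathbf L{=}\ell^{(k)})$, and $p(x,y\mid\,\cdot\,)=p(x\mid\,\cdot\,)\,p(y\mid\,\cdot\,)$ holds for every $k$ exactly when it holds for every value of $(\mathbf C,\mathbf L)$ of positive probability --- which is the right-hand side of the lemma.

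The part I expect to need the most care is the bookkeeping around the ``almost sure'' qualifiers: handling null sets and datasets of unequal size in the pooled mixture, coping with possibly continuous-valued latent contexts, and identifying precisely which assumption from the main text licenses the nontrivial direction ($D$ determined by $\mathbf C\cup\mathbf L$, or else exogeneity of $D$). Once those are pinned down, the conclusion is essentially a one-line consequence of step (ii).
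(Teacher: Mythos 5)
Your fallback argument (the parenthetical in step (i)) is essentially the paper's proof, which follows \citet{huang2020causal}: system variables are functions of $\mathbf{C}$, $\mathbf{L}$ and noise, the noise is independent of $D$, hence $(X,Y)\indep D \mid S\cup\mathbf{C}\cup\mathbf{L}$, i.e.\ $P(X,Y\mid S\cup\mathbf{C}\cup\mathbf{L}\cup\{D\})=P(X,Y\mid S\cup\mathbf{C}\cup\mathbf{L})$; combining this with the fact that $\mathbf{C}$ and $\mathbf{L}$ are deterministic functions of $D$ (so conditioning on $S\cup\{D\}$ is the same as conditioning on $S\cup\{D\}\cup\mathbf{C}\cup\mathbf{L}$) and weak-union-type reasoning closes the equivalence. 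If you make that the main argument, your proof is correct and matches the paper's route.

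Your primary route, however, has a genuine gap: the claim that $D$ is recoverable from $\mathbf{C}\cup\mathbf{L}$, i.e.\ that the two conditioning sets generate the same $\sigma$-algebra. You justify it by asserting that context profiles are pairwise distinct across datasets, but nothing in the setup guarantees this: two different datasets can realize identical values of all observed and latent context variables (indeed, with finitely supported contexts and $M\to\infty$ this must happen), and the paper's own theorem proof explicitly relies on the \emph{non-invertibility} of the map $D\mapsto L$. So $\sigma(S\cup\mathbf{C}\cup\mathbf{L})\subsetneq\sigma(S\cup\{D\})$ in general, and the "equal $\sigma$-algebras plus standard lemma" shortcut fails. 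Relatedly, your concrete identity $p(x,y\mid S, D{=}e_k)=p(x,y\mid S,\mathbf{C}{=}c^{(k)},\mathbf{L}{=}\ell^{(k)})$ is not bookkeeping: conditioning on $D{=}e_k$ selects a single dataset while conditioning on the context values pools all datasets sharing the profile $(c^{(k)},\ell^{(k)})$, and the equality of these two conditionals is exactly the nontrivial content that requires the exogeneity/noise-independence argument (the within-dataset law of the system depends on the dataset only through its context values). In short: drop the mutual-determination claim, promote your alternative argument to the proof, and you have the paper's proof.
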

\begin{proof}
This was already shown in the proof of Theorem 1 by \citet{huang2020causal} but, for convenience, we repeat the arguments here.
All system variables can be expressed as functions of $\mathbf{C}$, $\mathbf{L}$, and the noise. Therefore, the conditional distribution of the system given the dummy (i.e.\ the distribution within each data set) $P (\mathbf{X}|D)$ is determined by the joint distribution of the noise, and the observed and latent context variables $\mathbf{C} \cup \mathbf{L}$. This implies $P(X, Y | S \cup \mathbf{C} \cup \mathbf{L} \cup \{D\}) = P(X, Y | S \cup \mathbf{C} \cup \mathbf{L})$ where $S \subset \mathbf{X}$ (since the noise is independent of $D$). 
Then by recalling the weak union property of conditional independece as well using the fact that $\mathbf{L}$ and $\mathbf{C}$ are deterministic functions of $D$, it follows that, $X \indep Y | S \cup \{D\}$, i.e.\ $P(X, Y | S \cup  \{D\}) = P(X | S \cup  \{D\})P(Y | S \cup  \{D\})$ is equivalent to $P(X, Y | S \cup \mathbf{C} \cup \mathbf{L}) = P(X | S \cup \mathbf{C} \cup \mathbf{L})P(Y | S \cup \mathbf{C} \cup \mathbf{L})$.    

\end{proof}

We now recall theorem \ref{thm:nonts}.
\begin{theorem}[Non-time series consistency result] \label{thm:nonts}
Denote the output of J-PC (Algorithm 1 in the main text) as $\mathcal{G}_{alg}$.
Under the assumptions 1, 2, 3, 4, and assuming consistent conditional independence tests are used, the dummy deletion of $\mathcal{G}_{alg}$ corresponds to the dummy-deleted ground truth graph as the number of data sets $M$ tends to infinity.
\end{theorem}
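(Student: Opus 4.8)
The plan is to follow the standard soundness-and-completeness template for constraint-based causal discovery, adapted in two ways dictated by the structure of J-PC. The crucial adaptation is that every conditional-independence decision the algorithm makes while conditioning on the dummy $D$ is re-expressed, via Lemma~\ref{eq:dummy_cond}, as an independence obtained by conditioning on the full context set $\mathbf{C}\cup\mathbf{L}$ instead; and by the faithfulness assumption on the ground truth graph $\mathcal{G}$ (the only deterministic relations present are $D\to\mathbf{C}$ and $D\to\mathbf{L}$, and these are never ``entered'' except through Lemma~\ref{eq:dummy_cond}), such independences coincide with d-separations in $\mathcal{G}$. Since the CI tests are consistent and $M\to\infty$, every test performed by the algorithm returns its population value, so the whole run of J-PC is determined by d-separation in $\mathcal{G}$ after this translation. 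The second adaptation is that the three phases of the algorithm --- context--context, context--system, and system--system adjacencies --- must be analysed separately.

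First I would prove skeleton correctness phase by phase. In the context--context phase J-PC is just PC restricted to $\mathbf{C}$ and recovers the true context adjacencies. In the context--system phase the algorithm may not condition on $D$ (each context is a deterministic image of $D$, so conditioning on $D$ makes every context trivially independent of everything), so it conditions on system and observed-context variables only; here the no-mediation part of Assumption~2 is exactly what is needed, since it forces ``ancestral'' and ``direct'' to coincide for context--system relations, so the usual adjacency argument recovers precisely the edges from observed contexts $\mathbf{C}$ to the system. Edges from \emph{latent} contexts $\mathbf{L}$ to the system cannot be detected in this phase; they will instead surface as spurious dummy confounding later. In the system--system phase the algorithm does condition on $D$: a test of the form $X\indep Y\mid S\cup\{D\}$ with $S\subseteq\mathbf{X}$ is, by Lemma~\ref{eq:dummy_cond}, equivalent to $X\indep Y\mid S\cup\mathbf{C}\cup\mathbf{L}$, i.e.\ to d-separation of $X$ and $Y$ by $S\cup\mathbf{C}\cup\mathbf{L}$ in $\mathcal{G}$. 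I would then check both directions: if $X,Y$ are non-adjacent in $\mathcal{G}$, a d-separating set $Z\subseteq\mathbf{X}\cup\mathbf{C}\cup\mathbf{L}$ exists, and since context variables are exogenous to the system they can be neither colliders nor descendants of colliders on any $X$--$Y$ path, so enlarging $Z$ to $Z_{\mathrm{sys}}\cup\mathbf{C}\cup\mathbf{L}$ preserves d-separation --- hence J-PC removes the edge using $S=Z_{\mathrm{sys}}$; conversely, an adjacent pair is d-separated by no set at all, so the edge is kept. Therefore, over the system variables, the algorithm's skeleton matches the true system--system skeleton, possibly carrying extra edges incident to $D$ (the harmless dummy confounders, including the $D$--$X$ edges that stand in for latent-context parents).

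It then remains to pass to the dummy-deleted graphs and to handle orientations. The dummy-deletion operator (right panel of Figure~\ref{fig:dummy_proj}) removes $D$ together with all its incident edges and adds no new edges among $D$'s former neighbours; hence the dummy deletion of $\mathcal{G}_{alg}$ has exactly the system--system adjacencies of $\mathcal{G}$ plus the correct observed-context--system adjacencies found in the second phase, which is precisely the adjacency structure of the dummy-deleted ground truth graph. For orientations, unshielded colliders among system nodes are oriented correctly by the usual separating-set argument, now that the separating sets used are the population ones; orientations involving context nodes are handled by exogeneity of the contexts together with the rules recalled in the ``Context nodes help in orienting edges'' section; Meek-style propagation is sound under faithfulness. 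One checks that these orientation steps commute with dummy deletion: a $D$-incident edge is removed, and, emanating from an exogenous node, it never acts as the non-collider leg of an unshielded triple whose orientation would flip a surviving edge, so no orientation that survives dummy deletion is an artifact of $D$.

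The step I expect to be the main obstacle is the system--system analysis together with the ``no harm from dummy confounding'' claim: one must verify that forcing $\mathbf{C}\cup\mathbf{L}$ (equivalently $D$) into every conditioning set neither deletes a genuine system--system edge (ruled out because adjacency in $\mathcal{G}$ means no d-separating set exists) nor produces the kind of faithfulness violation illustrated by the determinism example in the faithfulness section --- the latter being exactly what Lemma~\ref{eq:dummy_cond} plus faithfulness of $\mathcal{G}$ prevent, since after the translation no determinism is in play. A secondary subtlety is the context--system phase: confirming that, without being allowed to condition on $D$, the no-mediation assumption really does suffice to recover all and only the direct observed-context--system edges.
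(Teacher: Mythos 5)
Your overall strategy (translate every dummy-conditioned CI statement into one conditioned on $\mathbf{C}\cup\mathbf{L}$ via Lemma~\ref{eq:dummy_cond}, then invoke faithfulness and the Markov condition phase by phase, with the context--system phase resting on the no-latent-confounding/no-mediation assumption) is the same skeleton as the paper's argument. But there is a genuine gap in your treatment of the system--system phase, and it stems from a misreading of what J-PC actually tests. The algorithm does \emph{not} force $D$ (equivalently $\mathbf{C}\cup\mathbf{L}$) into every conditioning set; it tests $X \indep Y \mid S \cup \text{Pa}_{CD}(X)$, where $\text{Pa}_{CD}(X)$ is the set of contextual and dummy parents of $X$ \emph{estimated in the earlier phases}. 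Because of this, the paper must first establish an intermediate result that your proposal omits entirely: that the dummy--system adjacencies returned by the algorithm coincide with those of the dummy projection of the ground truth graph (both directions, one using faithfulness plus non-invertibility of $L=g(D)$, the other using the Markov condition and the fact that, given its parents, $X$ depends only on noise that is independent of $D$). You dismiss the dummy--system edges as ``harmless dummy confounders'' to be deleted at the end, but they are not merely cosmetic: they determine the conditioning sets of every subsequent system--system test, so their correctness is a prerequisite for the system--system soundness you want to prove.

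Concretely, two cases in your system--system analysis are unsupported. First, when $D \notin \text{Pa}_{CD}(X)$, the algorithm never conditions on $D$ for that pair, so your argument ``a test of the form $X\indep Y\mid S\cup\{D\}$ is equivalent to $X\indep Y\mid S\cup\mathbf{C}\cup\mathbf{L}$'' does not describe any test the algorithm performs; one must instead show (as the paper does) that in this case $\text{Pa}_L(X)=\emptyset$, so that a separating set of the form $S\cup\text{Pa}_C(X)$ with $S\subset\mathbf{X}$ exists and is actually reached by the search. Second, in the converse (no false removals) direction one must verify that conditioning on a $\text{Pa}_{CD}(X)$ that may contain $D$ even when no real latent confounder of that pair exists cannot delete a true edge --- this is exactly the ``dummy confounding is harmless'' claim, and it requires the case analysis via Lemma~\ref{eq:dummy_cond} (if $D\in\text{Pa}_{CD}(X)$) or direct faithfulness (if not), again relying on the previously established correspondence of dummy--system links. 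Your enlargement argument (adding $\mathbf{C}\cup\mathbf{L}$ to a separating set preserves d-separation because exogenous contexts cannot be colliders with system neighbours or descendants of system colliders) is fine as far as it goes, but it proves correctness of a different, ``always condition on $D$'' algorithm, not of J-PC as specified; to repair the proof you need to insert the dummy--system soundness step and redo the system--system phase with $\text{Pa}_{CD}(X)$ in place of $\{D\}$. A smaller point: for the orientation/completeness part the paper appeals to completeness of PC under tiered background knowledge rather than re-deriving collider and Meek-rule correctness by hand, which sidesteps the need to check that orientations ``commute with dummy deletion.''
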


\begin{proof}
Let us denote the skeleton of the projected ground truth graph with deleted dummy nodes by $ \mathcal{G}^*$. Similarly, we denote the skeleton of the dummy-deleted output of the algorithm by $ \hat{\mathcal{G}}^*$. We call their dummy-projected version of the ground truth graph $\mathcal{G}$, and the output of the algorithm (which is essentially a dummy projection) $\mathcal{G}_{alg}$.

First, we prove soundness of the algorithm, in other words we need to show that $\hat{\mathcal{G}}^* = \mathcal{G}^*$.

The soundness of context-system links follows from the soundness of the PC algorithm on the subset of system and observed context nodes. 
Let $X \in \mathbf{X}$ and $C \in \mathbf{C}$. The algorithm removes a link iff $X \indep C | S$ where $S \subset \mathbf{X} \cup \mathbf{C}$. Then Faithfulness (w.r.t. ground truth graph to $P_m(X,C,L)$) implies that all links not in $ \hat{\mathcal{G}}^*$ are also not in $ \mathcal{G}^*$. \\
We also need to show that any context-system links that are not in $ \mathcal{G}^*$ are also not in $ \hat{\mathcal{G}^*}$. If the link between $X$ and $C$ is not in $ \mathcal{G}^*$, then $X \indep C | S$ where $S \subset \mathbf{X} \cup \mathbf{C} \cup \mathbf{L}$. Using the assumption that system and context (Assumption 3) are not confounded by latent variables and latent context nodes cannot be mediators between system and context, this is equivalent to $X \indep C | S$ where $S \subset \mathbf{X} \cup \mathbf{C}$. This is tested at some iterative step of the PC-algorithm, and consequently the link is removed.

Since the dummy-deleted graphs do not contain any links to the dummy, we need to show soundness for the system-system links.
However, within that it is needed that we find the correct dummy-system links.
In other words we first show that if the link $D - X$ is not in $\mathcal{G}_{alg}$, then it also is not in $\mathcal{G}$.\\
If the link $D - X$ is not in $\mathcal{G}_{alg}$, then $X \indep D | S \cup \text{Pa}_{C}(X)$ where $S \subset \mathbf{X}$. This implies that for all latent context nodes $L$ holds $X \indep L | S \cup \text{Pa}_{C}(X)$ since $L$ can be expressed as a function of $D$, i.e.\ there exists a function $g$ with $L=g(D)$. Therefore by Faithfulness and the non-invertibility of the function $g$, there is also no link between $X$ and $L$ in the ground truth graph, and thus also no link to the dummy in its projected version $\mathcal{G}$. \\ 
For the other direction, we need to show that if the link $D - X$ is not in $\mathcal{G}$, then it also is not in $\mathcal{G}_{alg}$.
If the link $D - X$ is not in $\mathcal{G}$, then for all latent nodes $L$ it holds $L - X$ is not in the ground truth graph. By the Causal Markov Condition, it holds $X \indep L | \text{Pa}(X)$ for all $L$. This also implies that $\text{Pa}_L(X)$ is empty. We also know that $X$ can be expressed as a function of the context nodes $\mathbf{C}$, $\mathbf{L}$ and the noise. This means, conditional on $\text{Pa}(X)$, $X$ only depends on the noise. The noise is independent of $D$, thus $X \indep D | \text{Pa}(X)$. Therefore, the algorithm will remove this dummy-system link.

Now, we prove soundness for the system-system links. The algorithm removes a link iff $X \indep Y | S \cup \text{Pa}_{CD}(X)$ where $S \subset \mathbf{X}$, $\text{Pa}_{CD}(X)$ dummy and contextual parents of $X$.
Note that by Assumption 4 there exist functions $g^i, h^j$ s.t. $L^i = g^i(D)$ and $C^j = g^j(D)$, and thus $P(X, Y | S \cup \mathbf{C} \cup \mathbf{L} \cup \{D\}) = P(X, Y | S \cup \{D\})$.\\
So, if $D \in \text{Pa}_{CD}(X)$ this yields, together with Lemma \ref{eq:dummy_cond}
\[
X \indep Y | S \cup \text{Pa}_{CD}(X) ~ \implies ~ X \indep Y | S \cup \{D\} ~ \implies ~ X \indep Y | S \cup \mathbf{C} \cup \mathbf{L}.
\]
And Faithfulness (of ground truth graph to $P_m(X,C,L)$) implies that this link is not in $ \mathcal{G}^*$.
If  $D \not \in \text{Pa}_{CD}(X)$, Faithfulness is directly applicable.

It remains to show that system-system links not in $ \mathcal{G}^*$ are also not in $ \hat{\mathcal{G}^*}$. If the link between X and Y is not in $ \mathcal{G}^*$, then $X \indep Y | S$ where $S \subset \mathbf{X} \cup \mathbf{C} \cup \mathbf{L}$, and also $X \indep Y | S \cup \text{Pa}_C(X) \cup \text{Pa}_L(X)$ where $S \subset \mathbf{X}$.
Again, we distinguish two cases:\\
First note that by what we proved above $\text{Pa}_{CD}(X)$ in the dummy-projection $\mathcal{G}$ is a subset of $\text{Pa}_{CD}(X)$ in the dummy-projection $\mathcal{G}_{alg}$ (i.e. it might not contain the dummy if $\mathcal{G}_{alg}$ has a dummy link to X).
If $D \not \in \text{Pa}_{CD}(X)$ in $\mathcal{G}$, then there exists $V \in \mathbf{C} \cup \mathbf{X}$ with $X \indep L | V$ (since $L$ is a function of $D$), i.e.\ $\text{Pa}_{L}(X) = \emptyset$, and thus $X \indep Y | S \cup \text{Pa}_{C}(X)$, but also $X \indep Y | S \cup \text{Pa}_{C}(X) \cup \{D\}$. So, in any case (if $D$ is a parent of $X$ in $\mathcal{G}_{alg}$ or not) the algorithm removes the link.\\
On the other hand, if $D \in \text{Pa}_{CD}(X)$ in $\mathcal{G}$, we use that $X \indep Y | S \cup \text{Pa}_C(X) \cup \text{Pa}_L(X)$ is equivalent to $X \indep Y | S \cup \mathbf{C} \cup \mathbf{L}$ which is equivalent to $X \indep Y | S \cup \{ D \}$. If this holds, then also $X \indep Y | S \cup \{ D \} \cup \text{Pa}_{C}(X)$, which implies $X \indep Y | S \cup \text{Pa}_{CD}(X)$. So, also in this case the algorithm will remove the link. This concludes the soundness proof.

Completeness follows from soundness of the context-system, dummy-system and system-system links proved above, and the completeness of the PC-algorithm under tiered background knowledge \cite{andrews2020completeness}.
\end{proof}

\subsection{Proof of Theorem \ref{thm:ts}}
We extend Lemma \ref{eq:dummy_cond} to the time series case.
\begin{lemma} \label{eq:dummy_cond_ts}
Let $\mathbf{X}$ be time series data. Define $\mathcal{B}^-_{XY}:=(\hat{\mathcal{B}}^-_t(Y_t) \setminus \{ X_{t-\tau} \}), \hat{\mathcal{B}}^-_{t-\tau}(X_{t-\tau})$ where $\hat{\mathcal{B}}^-_t(X_t)$ denotes the lagged adjacency set resulting from the lagged skeleton phase of PCMCI$^+$ (Algorithm 1 in \cite{runge2020discovering}).
For two system variables $X_{t-\tau}$ and $Y_t$, it holds for any $S \subset \mathbf{X}$
\begin{equation*} 
    X_{t-\tau} \indep Y_t | S, \mathcal{B}^-_{XY} ,D_\text{time} , D_\text{space} \quad \iff \quad X_{t-\tau} \indep Y_t | S, \mathcal{B}^-_{XY}, \mathbf{C}, \mathbf{L},
\end{equation*}
and 
\begin{equation*} 
    X_{t-\tau} \indep Y_t | S, \mathcal{B}^-_{XY}, D_\text{space} \quad \iff \quad X_{t-\tau} \indep Y_t | S \cup \mathbf{C}_\text{space},\mathcal{B}^-_{XY}, \mathbf{L}_\text{space},
\end{equation*}
as well as
\begin{equation*} 
    X_{t-\tau} \indep Y_t | S, \mathcal{B}^-_{XY}, D_\text{time} \quad \iff \quad X_{t-\tau} \indep Y_t | S , \mathcal{B}^-_{XY}, \mathbf{C}_\text{time} , \mathbf{L}_\text{time}.
\end{equation*}
\end{lemma}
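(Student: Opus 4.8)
The plan is to mirror the proof of Lemma~\ref{eq:dummy_cond} (equivalently, the argument in the proof of Theorem~1 of \citet{huang2020causal}), running it three times --- once for the two dummies jointly and once for each of $D_\text{space}$, $D_\text{time}$ in isolation --- while simply carrying the fixed lagged conditioning set $\mathcal{B}^-_{XY}$ through the argument in the role that $S$ plays in Lemma~\ref{eq:dummy_cond}, since the elements of $\mathcal{B}^-_{XY}$ are again system variables. The two structural facts I would rely on are: (a) by the SCM, every system variable can be written as a function of the observed contexts $\mathbf{C}$, the latent contexts $\mathbf{L}$, and the noise, where the noise is jointly independent of $D_\text{time}$ and $D_\text{space}$; and (b) by (the time-series version of) Assumption~4, $\mathbf{C}_\text{space}\cup\mathbf{L}_\text{space}$ is a deterministic function of $D_\text{space}$ and $\mathbf{C}_\text{time}\cup\mathbf{L}_\text{time}$ is a deterministic function of $D_\text{time}$, so that $\mathbf{C}\cup\mathbf{L}$ is a deterministic function of $(D_\text{time},D_\text{space})$.

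For the first equivalence I would argue, exactly as in Lemma~\ref{eq:dummy_cond}, that because $S\cup\mathcal{B}^-_{XY}$ consists of system variables (hence of functions of contexts and noise), the conditional law $P(X_{t-\tau},Y_t \mid S,\mathcal{B}^-_{XY},\mathbf{C},\mathbf{L})$ is already pinned down by the conditional noise distribution, so adjoining the dummies changes nothing:
\[
P(X_{t-\tau},Y_t \mid S,\mathcal{B}^-_{XY},\mathbf{C},\mathbf{L},D_\text{time},D_\text{space}) = P(X_{t-\tau},Y_t \mid S,\mathcal{B}^-_{XY},\mathbf{C},\mathbf{L}).
\]
Conversely, since $\mathbf{C}\cup\mathbf{L}$ is a deterministic function of $(D_\text{time},D_\text{space})$, conditioning on the two dummies already determines $\mathbf{C}$ and $\mathbf{L}$, so $P(\,\cdot \mid S,\mathcal{B}^-_{XY},D_\text{time},D_\text{space}) = P(\,\cdot \mid S,\mathcal{B}^-_{XY},\mathbf{C},\mathbf{L},D_\text{time},D_\text{space})$. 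Chaining the two identities (for the joint law of $(X_{t-\tau},Y_t)$ and for each marginal) shows that the factorization defining $X_{t-\tau}\indep Y_t$ holds on the dummy side iff it holds on the $\mathbf{C},\mathbf{L}$ side --- this is the weak-union-style step of Lemma~\ref{eq:dummy_cond}.

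For the second and third equivalences I would run the same argument restricted to a single dummy. The extra point to verify is that, conditional on $\mathbf{C}_\text{space}\cup\mathbf{L}_\text{space}$ (resp.\ $\mathbf{C}_\text{time}\cup\mathbf{L}_\text{time}$), the system variables depend only on the \emph{other} family of contexts and on the noise, both of which are independent of $D_\text{space}$ (resp.\ $D_\text{time}$) because context variables are exogenous to the system and the two dummies index disjoint sources of heterogeneity. This gives
\[
P(X_{t-\tau},Y_t \mid S\cup\mathbf{C}_\text{space},\mathcal{B}^-_{XY},\mathbf{L}_\text{space},D_\text{space}) = P(X_{t-\tau},Y_t \mid S\cup\mathbf{C}_\text{space},\mathcal{B}^-_{XY},\mathbf{L}_\text{space}),
\]
and together with the fact that $\mathbf{C}_\text{space}\cup\mathbf{L}_\text{space}$ is a deterministic function of $D_\text{space}$, the same chaining yields the equivalence; the time case is symmetric.

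The step I expect to be the main obstacle is making the ``the conditional law is determined by the noise and the contexts'' claim fully rigorous when the conditioning set is a mixture of system variables ($S$, $\mathcal{B}^-_{XY}$), observed contexts, and latent contexts --- particularly because the latent contexts (and, in the non-stationary regime, possibly some system relations) enter deterministically, so one must ensure every conditional distribution invoked is well defined and that the weak-union step is not silently importing faithfulness. I would handle this by phrasing everything at the level of conditional distributions, as in Lemma~\ref{eq:dummy_cond}, and by noting that all the determinism ever does is make one conditioning set a measurable function of another, which is exactly the direction in which the weak-union property is applied.
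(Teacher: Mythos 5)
Your proposal is correct and follows essentially the same route as the paper: the paper's proof of this lemma simply states that it follows exactly as in Lemma~\ref{eq:dummy_cond} (system variables as functions of $\mathbf{C}$, $\mathbf{L}$ and noise independent of the dummies, plus the determinism of contexts given the dummies and weak union), and that the separate space and time equivalences hold because each context variable is either space- or time-dependent, which is precisely the observation you verify for the single-dummy cases. Your write-up merely spells out in more detail what the paper leaves implicit, including carrying $\mathcal{B}^-_{XY}$ through in the role of additional system conditioners.
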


\begin{proof}
The following equation follows exactly in the same way as Lemma \ref{eq:dummy_cond}. Since the observed and latent context variables are either space- or time-dependent, this also works for the space and time dimension separately.
\end{proof}

\begin{theorem}[Time series consistency result] \label{thm:ts}
Denote the time series graph output of J-PCMCI$^+$ (Algorithm 2 in the main text) as $\mathcal{G}_{alg}$.
Under assumptions 1, 2, 3, 4, and assuming consistent conditional independence tests are used, the dummy deletion of $\mathcal{G}_{alg}$ corresponds to the target graph (definition 1) as the number of data sets $M$ and the number of times steps $T$ tend to infinity.
\end{theorem}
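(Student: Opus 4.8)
The plan is to mirror the proof of Theorem~\ref{thm:nonts}, with PCMCI$^+$ in place of PC and Lemma~\ref{eq:dummy_cond_ts} in place of Lemma~\ref{eq:dummy_cond}. Since consistent CI tests are used and $M,T\to\infty$, the CI decisions of the algorithm agree with the true conditional (in)dependencies of the relevant (within-dataset) distribution $P_m$, so it suffices to reason at the population level. As before, the argument decomposes into establishing soundness and completeness of the dummy-deleted skeleton $\hat{\mathcal{G}}^*$ of $\mathcal{G}_{alg}$ with respect to the skeleton $\mathcal{G}^*$ of the target graph (Definition~1), after which the orientations are determined by the completeness of PCMCI$^+$ under tiered background knowledge \citep{andrews2020completeness}, the time order together with the exogeneity of context and dummy nodes supplying a valid tiered ordering.

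Before treating adjacencies I would record the one genuinely new ingredient: by soundness of the lagged skeleton phase of PCMCI$^+$ (as $T\to\infty$) the sets $\hat{\mathcal{B}}^-_t(X_t)$ contain all true lagged parents of $X_t$, so conditioning on $\mathcal{B}^-_{XY}$ never drops a parent of either endpoint and only possibly adds non-parents; this is exactly the regime in which Lemma~\ref{eq:dummy_cond_ts} applies. The context--system links are then handled verbatim as in Theorem~\ref{thm:nonts}: they are tested without conditioning on any dummy, so Faithfulness of the ground-truth graph to $P_m$ gives one direction, and Assumption~3 (no latent confounding of, and no latent mediation between, system and context) reduces any separating set $S\subset\mathbf{X}\cup\mathbf{C}\cup\mathbf{L}$ to one in $\mathbf{X}\cup\mathbf{C}$ for the converse; the only subtlety is that, because the time dummy is inserted once, lagged temporal-context edges are recovered only as the single ``influenced-or-not'' edge, which is precisely what the target graph of Definition~1 contains. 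The dummy--system links follow the non-time-series argument applied separately to $D_\text{space}$ (with $\mathbf{L}_\text{space}$) and $D_\text{time}$ (with $\mathbf{L}_\text{time}$): absence of a dummy edge forces $X\indep L$ for every latent context $L$ on the relevant side (since $L=g(D)$), so by Faithfulness and non-invertibility of $g$ there is no $L$--$X$ edge in the ground truth; conversely, if no latent context is a parent of $X$, then conditioning on $\text{Pa}(X)$ leaves $X$ a function of the noise alone, which is independent of both dummies.

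The heart of the proof is the system--system links. The algorithm removes $X_{t-\tau}-Y_t$ iff $X_{t-\tau}\indep Y_t \mid S,\mathcal{B}^-_{XY},\text{Pa}_{CD}(X)$ for some $S\subset\mathbf{X}$. I would case-split on which of $D_\text{time},D_\text{space}$ lies in $\text{Pa}_{CD}(X)$ and push each present dummy through the corresponding equivalence of Lemma~\ref{eq:dummy_cond_ts}, rewriting the conditioning set in terms of $\mathbf{C},\mathbf{L}$ (or their space/time restrictions), then invoke Faithfulness to conclude the link is absent from $\mathcal{G}^*$. For the converse, a missing edge gives $X_{t-\tau}\indep Y_t \mid S,\mathcal{B}^-_{XY},\mathbf{C},\mathbf{L}$; using, as in Theorem~\ref{thm:nonts}, that the contextual/latent parents of $X$ in $\mathcal{G}$ form a subset of $\text{Pa}_{CD}(X)$ in $\mathcal{G}_{alg}$ (differing at most by a non-essential dummy link), the same equivalences let me re-express this with conditioning set $S,\mathcal{B}^-_{XY},\text{Pa}_{CD}(X)$, so the algorithm removes the link. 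Skeleton completeness then follows from soundness of the three link types, and the orientation claim from the completeness of PCMCI$^+$ under tiered background knowledge.

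The step I expect to be the main obstacle is the bookkeeping around the lagged conditioning set $\mathcal{B}^-_{XY}$ combined with the once-only insertion of the time dummy: one must check carefully that Lemma~\ref{eq:dummy_cond_ts} really does license swapping dummies for context variables inside conditioning sets that simultaneously contain $\mathcal{B}^-_{XY}$ and a partial parent set, and---more importantly---that the graph the algorithm reconstructs after dummy deletion is \emph{exactly} the target graph of Definition~1, i.e.\ that lagged temporal-context edges collapse correctly and that a ``dummy confounder'' (Figure~\ref{fig:dummy_proj}) never induces a spurious contemporaneous system--system edge nor removes a genuine one.
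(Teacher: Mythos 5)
Your proposal is correct and follows essentially the same route as the paper's proof: the same decomposition into context--system, dummy--system and system--system links, the lagged-phase superset property from Lemma~S1 of \citet{runge2020discovering}, Lemma~\ref{eq:dummy_cond_ts} to swap dummies for (latent) context variables inside the conditioning sets, Faithfulness plus Causal Markov with weak union (via a $W^-_t$-type set) for the converse directions, and completeness from PCMCI$^+$ together with the exogeneity of context for orienting context--system links. The bookkeeping obstacle you flag at the end is exactly what the paper resolves by constructing $W^-_t$ explicitly and case-splitting on whether $\text{Pa}(X_t^j)$ contains latent nodes (replacing them by conditioning on the corresponding dummy), so your plan fills in as intended.
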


\begin{proof}
Let us denote the skeleton of the projected ground truth time series graph with deleted dummy nodes by $ \mathcal{G}^*$. Similarly, we denote the skeleton of the dummy-deleted time series graph output of the algorithm by $ \hat{\mathcal{G}}^*$. We call their dummy-projected version of the ground truth graph $\mathcal{G}$, and the output of the algorithm (which is essentially a dummy projection) $\mathcal{G}_{alg}$.

Soundness:\\
First, note that the lagged phase returns a set that always contains the parents of $X^j_t$ by Lemma S1 in \citet{runge2020discovering}. This still holds if latent context confounders are present, only additional links are possible. 

Now, we show soundness of the system-context links. 
If $X_t^j - C_{t-\tau}^i $ not in $ \hat{\mathcal{G}}^*$, then by Faithfulness also $X_t^j - C_{t-\tau}^i$ not in $ \mathcal{G}^*$.\\
For the other direction,if the link between $X_t^j$ and $C_{t-\tau}^i$ is not in $ \mathcal{G}^*$, due to the Causal Markov Condition it holds $(X_{t-\tau}^i, W^-_t) \indep C_t^j | \text{Pa}(C_t^j)$.
Define $W^-_t := (\hat{\mathcal{B}}^-_t(X_t^j) \setminus \{ C_{t-\tau}^i \}), \hat{\mathcal{B}}^-_{t-\tau}(C_{t-\tau}^i ) \setminus \text{Pa}(X^j_t)$ as in \cite{runge2020discovering} where $\hat{\mathcal{B}}^-_t(X_t)$ denotes the lagged adjacency set resulting from the lagged skeleton phase of PCMCI$^+$ (Algorithm 1 in \cite{runge2020discovering}).
Using the weak union property of conditional independence this implies $X_{t-\tau}^i \indep C_t^j | \text{Pa}(C_t^j), W^-_t$ which is, by definition of $W^-_t$ equivalent to $X_{t-\tau}^i \indep C_t^j | \text{Pa}(C_t^j), \hat{\mathcal{B}}_t^-(C_t^j) \setminus \{ X_{t-\tau}^i \}, \hat{\mathcal{B}}_{t-\tau}^-(X_{t-\tau}^i)$.
 Note that $\text{Pa}(C_t^j) \subset \mathbf{C} \cup \mathbf{L}$, however by Assumptions 2, there exist no latent confounders or mediators between system and context, thus we also find a set $S \subset \mathbf{C}$ s.t. $X_{t-\tau}^i \indep C_t^j | S, \hat{\mathcal{B}}_t^-(C_t^j) \setminus \{ X_{t-\tau}^i \}, \hat{\mathcal{B}}_{t-\tau}^-(X_{t-\tau}^i)$. This is tested at some iterative step of the algorithm and the link is removed.

Even though, eventually we are only interested in the soundness of system-context and system-system links, we need to establish that the dummy-system links within $\mathcal{G}_{alg}$ correspond to those in the dummy-projected ground truth graph $\mathcal{G}$. In the following, $D$ can either denote $D_\text{time}$ or $D_\text{space}$. \\
Now, we show if the link $D - X^j_t$ is not in $\mathcal{G}_{alg}$, then it also is not in $\mathcal{G}$. If the link $D - X^j_t$ is not in $\mathcal{G}_{alg}$, then $D \indep X_t^j | \mathbf{S}, \hat{\mathcal{B}}^C_t(X_t^j) $. This conditional independence also holds for all latent context nodes $L$ since it can be expressed as a non-invertible function of $D$. Therefore by Faithfulness and the non-invertibility of this function, there is also no link between $X^j_t$ and $L$ in the ground truth graph, and thus also no link to the dummy in its projected version $\mathcal{G}$.

For the other direction, let us define $W_t^- :=  \hat{\mathcal{B}}^-_t(X_t^j) \setminus \text{Pa}(X^j_t)$, the set $W_t^-$ does not contain parents of $X_t^j$, it also does not contain any latent nodes.
If the link $D - X^i_t$ is not in $\mathcal{G}$, then for all latent nodes $L$ it holds $L - X^i_t$ is not in the ground truth graph. 
Thus by the Causal Markov Condition $(L, W_t^-) \indep X_t^j | \text{Pa}(X^j_t)$, and by the weak union property and using the definition of $W_t^j$, we get $L \indep X_t^j | \text{Pa}(X^j_t), \hat{\mathcal{B}}^-_t(X_t^j)$ for all $L \in \mathbf{L}$. This also implies that $\text{Pa}_L(X^j_t) = \emptyset$.\\
Also, similarly to the non time series case, $X_t^j$ can be expressed as a function of the context nodes $\mathbf{C}$, $\mathbf{L}$ and the noise (and auto-correlation which is accounted for by $\hat{\mathcal{B}}^-_t(X_t^j)$).
This means, conditional on $(\text{Pa}(X_t^j) \setminus \mathbf{L}) \cup  \hat{\mathcal{B}}^-_t(X_t^j)$, $X_t^j$ only depends on the noise. The noise is independent of $D$, thus $X_t^j \indep D | \text{Pa}(X_t^j)$. Therefore, the algorithm will remove this dummy-system link.

Next, we show the soundness of the discovery of the system-system links.\\
We first show, if the link $X^i_{t-\tau} - X^j_t$ not in $ \hat{\mathcal{G}}^*$ then it is also not in $ \mathcal{G}^*$. Essentially, this follows with the same arguments as in non time series case combined with Faithfulness, but we will go through the arguments in more detail now.\\ To simplify the notation, we make the abbreviation $\mathcal{B} := \hat{\mathcal{B}}_{t}^-(X_{t}^j) \setminus \{ X^i_{t-\tau} \}, \hat{\mathcal{B}}_{t-\tau}^-(X_{t-\tau}^i)$. The algorithm removes the link between $X^i_{t-\tau}$ and $X^j_t$ if and only if
    \[
        X^i_{t-\tau} \indep X^j_t | S, \mathcal{B}, \text{Pa}_{CD}(X^i_{t-\tau}, X^j_{t})
    \]
for some $S \in \hat{\mathcal{A}}_t(X_t^j)$.\\
If $D_\text{time}, D_\text{space} \not \in \text{Pa}_{CD}(X^i_{t-\tau}, X^j_{t})$: Faithfulness is directly applicable.\\ 
Let now $D$ be either $D_\text{time}$ or $D_\text{space}$. If $D \in \text{Pa}_{CD}(X^i_{t-\tau}, X^j_{t})$ this yields, together with Lemma \ref{eq:dummy_cond_ts}
    \[
    X \indep Y | S, \mathcal{B}, \text{Pa}_{CD}(X, Y) ~ \implies ~ X \indep Y | S , \mathcal{B}, \{D\} ~ \implies ~ X \indep Y | S , \mathcal{B}, \mathbf{C}, \mathbf{L}.
    \]
and we can apply the Faithfulness argument.

 Now we show the other direction, i.e.\ if $X^i_{t-\tau} - X^j_t$ not in $\mathcal{G}^*  ~ \implies ~ $ $X^i_{t-\tau} - X^j_t$ not in $ \hat{\mathcal{G}}^* $. 
For that, we define $W^-_t := (\hat{\mathcal{B}}^-_t(X_t^j) \setminus \{ X^i_{t-\tau} \}, \hat{\mathcal{B}}^-_{t-\tau}( X^i_{t-\tau}), \text{Pa}_{C}(X^i_{t-\tau}) ) \setminus \text{Pa}(X^j_t)$ similar to \cite{runge2020discovering}. This set does not contain any parents of $ X^j_t$ and by the assumption also $X^i_{t-\tau}$ is not a parent of $ X^j_t$. Furthermore, we assume that for $\tau=0$, $ X^i_t$ is not a descendant of $ X^j_t$ (can be always achieved by exchanging the roles of $ X^i_t$ and $ X^j_t$).\\
Then the Causal Markov Condition implies $(X_{t-\tau}^i, W^-_t) \indep X_t^j | \text{Pa}(X_t^j)$
Using the weak union property this implies $X_{t-\tau}^i \indep X_t^j | \text{Pa}(X_t^j), W^-_t$ which is, by definition of $W^-_t$, equivalent to 
    \begin{equation} \label{eq:*}
        X_{t-\tau}^i \indep X_t^j | \text{Pa}(X_t^j), \hat{\mathcal{B}}_t^-(X_t^j) \setminus \{ X_{t-\tau}^i \}, \hat{\mathcal{B}}_{t-\tau}^-(X_{t-\tau}^i), \text{Pa}_{C}(X^i_{t-\tau})
    \end{equation}
    Note that the conditioning set potentially also contains nodes from $\mathbf{L}$ (but only in $\text{Pa}(X_t^j)$).
    This also implies 
    \begin{equation} \label{eq:**}
        X_{t-\tau}^i \indep X_t^j | \text{Pa}(X_t^j) \setminus \mathbf{L}, \hat{\mathcal{B}}_t^-(X_t^j) \setminus \{ X_{t-\tau}^i \}, \hat{\mathcal{B}}_{t-\tau}^-(X_{t-\tau}^i), \text{Pa}_C(X_{t-\tau}^j), \{ D_\text{time}, D_\text{space} \},
    \end{equation}
    and similarly if $\text{Pa}(X_t^j)$ only contains nodes from $\mathbf{L}_\text{space}$ which can be accounted for by additionally conditioning on $D_\text{space}$ (same argument holds if we replace space by time).
    If there are no latent nodes in $\text{Pa}(X_t^j)$, then (\ref{eq:*}) is the same as $X_{t-\tau}^i \indep X_t^j | \text{Pa}_{XC}(X_t^j), \hat{\mathcal{B}}_t^-(X_t^j) \setminus \{ X_{t-\tau}^i \}, \hat{\mathcal{B}}_{t-\tau}^-(X_{t-\tau}^i)$, in our algorithm we either test this or (\ref{eq:**}) and therefore remove the link.\\
    If there are latent nodes in $\text{Pa}(X_t^j)$, then $D \in \text{Pa}_{CD}(X_t^j)$ within $\mathcal{G}$ and thus also in $\text{Pa}_{CD}(X_t^j)$ within $\mathcal{G}_{alg}$ 
    , so we test (\ref{eq:**}) and remove the link.

For system-system links completeness follows as in \cite{runge2020discovering}. The context-system links are already correctly oriented by the exogeneity assumption (context cannot be a descendent of system).
\end{proof}

\begin{corollary}
If some of the observed context variables are treated as unobserved, and the assumptions 1, 2, 3, 4 still hold, our method J-PCMCI$^+$ will recover the correct system-system adjacencies.
\end{corollary}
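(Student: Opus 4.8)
The plan is to view the hypothesis as a purely bookkeeping change: reclassifying a subset $\mathbf{C}' \subseteq \mathbf{C}$ of the observed context variables as unobserved replaces the partition $(\mathbf{C},\mathbf{L})$ by $(\mathbf{C}\setminus\mathbf{C}', \,\mathbf{L}\cup\mathbf{C}')$ while leaving the ground-truth SCM, the dummy variables, and (by hypothesis) Assumptions 1--4 unchanged. Since Assumption 4 still guarantees that every member of the enlarged latent set is a non-invertible deterministic function of the appropriate dummy, Lemma~\ref{eq:dummy_cond_ts} continues to hold verbatim for the new partition. Hence every ingredient of the proof of Theorem~\ref{thm:ts} that bears on the system-system skeleton remains available, and the proof is essentially a matter of checking that the relevant sub-arguments never distinguished ``originally observed'' from ``originally latent'' context.

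Concretely I would re-run the system-system part of Theorem~\ref{thm:ts} in three steps. First, re-establish that the dummy-system links of $\mathcal{G}_{alg}$ coincide with those of the dummy-projected ground truth graph $\mathcal{G}$: the forward direction uses only non-invertibility of $L=g(D)$ for $L\in\mathbf{L}\cup\mathbf{C}'$ together with Faithfulness, and the backward direction uses only the Causal Markov Condition plus the fact that $X_t^j$ is a function of the context nodes and the noise (with autocorrelation absorbed into $\hat{\mathcal{B}}^-_t(X_t^j)$); neither cares about the original label of a context node. Second, for soundness of system-system adjacencies, repeat the case split on whether $D_\text{time},D_\text{space}\in\text{Pa}_{CD}(X^i_{t-\tau},X^j_t)$: if the dummy is absent, Faithfulness applies directly; if it is present, Lemma~\ref{eq:dummy_cond_ts} converts conditioning on $D$ into conditioning on $S,\mathcal{B},\mathbf{C},\mathbf{L}$ with the new partition and Faithfulness again applies. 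Third, for completeness, repeat the Causal-Markov/weak-union derivation of (\ref{eq:*})--(\ref{eq:**}); the only effect of the reclassification is that more of the parents in $\text{Pa}(X^j_t)$ now lie in the latent set, which simply places us in the already-handled branch ``there are latent nodes in $\text{Pa}(X^j_t)$'', where conditioning on the dummy in place of those latent parents is exactly what the algorithm tests. Assumptions 2 and 3 (no latent confounder or mediator between system and context) still hold, so no spurious system-system edge is introduced.

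The step I expect to be the main obstacle is that the lagged adjacency sets $\hat{\mathcal{B}}^-_t(\cdot)$ and the algorithm's other conditioning sets are now computed over a strictly smaller set of observed variables, so they are literally different objects than in Theorem~\ref{thm:ts}; the argument must not secretly rely on their previous form. What needs to be verified is that the properties actually invoked survive: by Lemma~S1 of \citet{runge2020discovering} the lagged phase still returns supersets of the lagged parents among the remaining observed variables, and any lagged parent that was moved into $\mathbf{C}'$ is now a non-invertible function of a dummy and is therefore controlled by conditioning on that dummy --- which the algorithm does precisely when the dummy-system link is present, and Step~1 guarantees that link is present exactly when such a latent parent exists. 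Once this is checked, the three steps above yield $\hat{\mathcal{G}}^* = \mathcal{G}^*$ on system-system pairs.

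Finally I would state explicitly what is \emph{not} claimed, since this is where the corollary is genuinely weaker than Theorem~\ref{thm:ts}: the context-system links incident to variables in $\mathbf{C}'$ are now absorbed into dummy-system links rather than recovered, and any orientation of a system-system edge that previously relied on an unshielded triple $C\to X - Y$ with $C\in\mathbf{C}'$ may be lost. Only the system-system adjacencies are asserted, and those are preserved by the argument above.
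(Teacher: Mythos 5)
Your proposal is correct and takes essentially the same route as the paper: the paper's proof is simply the observation that, since Assumptions 1--4 still hold under the re-partition $(\mathbf{C}\setminus\mathbf{C}',\,\mathbf{L}\cup\mathbf{C}')$, Theorem~\ref{thm:ts} applies directly and in particular yields the correct system-system adjacencies. Your detailed re-tracing of the dummy-system, soundness and completeness steps is just an explicit unpacking of what the paper leaves as ``follows directly from Theorem~\ref{thm:ts}.''
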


\begin{proof}
    This follows directly from theorem \ref{thm:ts}.
\end{proof}

\section{Pseudocode}
We present the pseudocodes for \texttt{poolData, partialSkeletonPC} and \texttt{partialContempSkeletonPCMCI+} below. 
\RestyleAlgo{ruled}
\begin{algorithm}
\caption{poolData (for non-time-series data)\\
For the time-series case we rely on the functionality supplied in Tigramite \citep{runge2019detecting} to handle time-series data from multiple data-sets while keeping the time structure (in particular while using the sliding window approach to cunstruct time-series data for the lagged variables $X^i_{-\tau}$ for $\tau> 0$}
\KwData{$M$ data-sets $\mathbf{X}$ containing observations of the same system (and for time-series case: temporal context) variables, $M$ observations of context variables $\mathbf{C}$ (one per data-set), optional: dummy variable with $M$ distinct values}
\KwResult{one data-set containing the pooled data}
Let $N$ denote the number of system variables\\
Let $K$ denote the number of (observed) context variables\\
Let $T_m$ denote the sample-size of the system variables with $m=1, \ldots, M$\\
\For{$i$ in $1, \ldots, N$}{
concatenate $(X^{i, (m)})_{m=1, \ldots, M}$
}
\For{$j$ in $1, \ldots, K$}{
construct array of context variable $C^j$ by repeating its $M$ values $T_m$ times
}
\If{data for the dummy variable $D$ is provided}{
construct array for the dummy variable $D$ by repeating its $M$ values $T_m$ times\\
}

return $(X^1, \ldots, X^N, C^1, \ldots, C^K, D)$
\end{algorithm}

\RestyleAlgo{ruled}
\begin{algorithm}
\caption{partialSkeletonPC\\
$\operatorname{CI}(X, Y, S)$ is some suitable conditional independence test}
\KwData{Data $\mathbf{X}$, significance level $\alpha$, node pairs to consider $\mathcal{P}$, link knowledge $\mathcal{B}$}
\KwResult{graph $\mathcal{G}$}
Form a graph $\mathcal{G}$ with information from $\mathcal{B}$, connect all other nodes with undirected links \\
Set $p=0$\\
\While{any adjacent pairs $(X, Y)$ in $\mathcal{P}$ satisfy $|\mathcal{A}(X) \setminus \{Y\}| \geq p$}{
    Select an adjacent pair $(X, Y)$ from $\mathcal{P}$ with $|\mathcal{A}(X) \setminus \{Y\}| \geq p$\\
    Select $S \subset \mathcal{A}(X) \setminus \{Y\}$ with $|S| = p$\\
    p-value $\leftarrow \operatorname{CI}(X, Y, S)$\\
    \If{p-value $>\alpha$}{
         Delete link $X - Y$ from $\mathcal{G}$\\
         Store (unordered) sepset $(X, Y) = \mathcal{S}$
         }
}
return $\mathcal{G}$, sepset
\end{algorithm}

\RestyleAlgo{ruled}
\begin{algorithm*}
\caption{partialContempSkeletonPCMCI+, small adaption of Algorithm 2 in \cite{runge2020discovering}\\
$\operatorname{CI}(X, Y, S)$ is some suitable conditional independence test}
\KwData{$M$ time-series data-sets $X^{(m)}=(X^{1,(m)}, \ldots, X^{N,(m)})$ which can contain system, context and also dummy variables, indices of system variables $J$, max. time lag $\tau_\text{max}$, significance threshold $\alpha_\text{PC}$, $\hat{\mathcal{B}}^-_{t}(X_t^j)$ for all $X_t^j \in \mathbf{X}_t = (X_t^{1}, \ldots, X_t^{N})$\footnote{Note that we drop the data-set index $(m)$ whenever we refer to the nodes (and not the variables).}, contextual parents $\mathcal{C}(X)$, pairs to consider $\mathcal{P}$}
\KwResult{graph $\mathcal{G}$, sepset}
 Form time series graph $\mathcal{G}$ with lagged links from $\hat{\mathcal{B}}^-_{t}(X_t^j)$ for all $X_t^j \in \mathbf{X}_t$, fully connect all contemporaneous system variables, i.e.\, add $X^i_t - X^j_t$ for all $X_t^i \neq X_t^j \in \mathbf{X}_t $ with $i,j \in J$,  and set links between context and system according to $\mathcal{C}(X)$  \\
 Initialize contemporaneous adjacencies $\hat{\mathcal{A}}(X_t^j) := \hat{\mathcal{A}}_t(X_t^j) = \{ X_t^i \neq X_t^j \in \mathbf{X}_t | X^i_t - X^j_t \text{ in } \mathcal{G} \}$\\
 Let $p = 0$\\
 \While{any adjacent pairs $(X^i_{t-\tau}, X^j_t)$ for $\tau \geq 0$ in $\mathcal{G}$ from $\mathcal{P}$ satisfy $| \hat{\mathcal{A}}(X_t^j) \setminus \{ X_{t-\tau}^j \} | \geq p$}{
 Select new adjacent pair $(X^i_{t-\tau}, X^j_t)$ from $\mathcal{P}$ for $\tau \geq 0$ satisfying $| \hat{\mathcal{A}}(X_t^j) \setminus \{ X_{t-\tau}^j \} | \geq p$ \\
     \While{$(X^i_{t-\tau}, X^j_t)$ are adjacent in $\mathcal{G}$ and not all $S \subset \hat{\mathcal{A}}(X_t^j) \setminus \{ X_{t-\tau}^j \}$ with $|S| = p$ have been considered }{
     Choose new $S \subset \hat{\mathcal{A}}(X_t^j) \setminus \{ X_{t-\tau}^j \}$ with $|S| = p$\\
     Set $\mathbf{Z}=(S, \hat{\mathcal{B}}^-_{t}(X_t^j) \setminus \{ X_{t-\tau}^j \}, \hat{\mathcal{B}}^-_{t-\tau}(X_{t-\tau}^j)$\\
     $(X^i_{t-\tau}, X^j_t, \mathbf{Z}) \leftarrow \operatorname{poolData}((X^{i, (m)}_{t-\tau}, X^{j, (m)}_t, \mathbf{Z}^{(m)})_{m=1, \ldots, M})$\\
     $(\text{p-value}, I) \leftarrow \text{CI}(X^i_{t-\tau}, X^j_t, \mathbf{Z})$\\
     $I^\text{min}(X^i_{t-\tau}, X^j_t) = \min(|I|, I^\text{min}(X^i_{t-\tau}, X^j_t))$\\
         \If{p-value $>\alpha_\text{PC}$}{
         Delete link $X^i_{t-\tau} \rightarrow X^j_t$ for $\tau > 0$ (or $X^i_t - X^j_t$ for $\tau = 0$) from $\mathcal{G}$\\
         Store (unordered) sepset $(X^i_{t-\tau}, X^j_t) = \mathcal{S}$
         }
     Let $p=p+1$ and re-compute $\hat{\mathcal{A}}(X_t^j)$ from $\mathcal{G}$ and sort by $I^\text{min}(X^i_{t-\tau}, X^j_t)$ from largest to smallest\\
     }
 }
return $\mathcal{G}$, sepset
\end{algorithm*}

\newpage
\section{Simplified experimental setup}\label{sec:simple_setup}

We want to understand the shape of the adjacency-FPR surface of our method better. From the numerical results of the standard setup it seems that for a fixed samplesize $T$ the FPR goes up with the number of datasets. On the other hand, for a fixed number of datasets $M$, FPR goes down with increasing samplesize. A similar pattern is visible when simply applying PCMCI$^+$ on data where dummy variables have been included. 

To this end, we simplify our experimental setup in the following way. We sample data from a specific version of the SCM (1):
    \begin{equation}
        \begin{split}
            X^0_t &:=  0.5X^1_{t}+0.5C_\text{space}^0 + 0.5C_\text{space}^1 +0.5C_{\text{time}, t-1}^0 + 0.5C_{\text{time}, t-1}^1+ \eta^0\\
            X^1_t &:= 0.5X^1_{t-1}+ 0.5C_\text{space}^0 + 0.5C_\text{space}^1 +0.5C_{\text{time}, t-1}^0 +0.5C_{\text{time}, t-1}^1 + \eta^1\\
            C_\text{space}^0 &:= \eta^0_\text{space}\\
            C_\text{space}^1 &:= \eta^1_\text{space}\\
            C_{\text{time},t}^0 &:= \eta^0_\text{time}\\
            C_{\text{time},t}^1 &:= \eta^1_\text{time},
        \end{split}
    \end{equation}
where $C_\text{space}^1$ and $C_\text{time}^1$ are unobserved, and all other variables are observed.
On the system data of this SCM we apply a modified version of PCMCI$^+$ where we always including the dummy variables in the conditioning sets of all conditional independence tests. By doing so, we are able to see what effect conditioning on the dummies has on the FPR, see figure \ref{fig:simple_setup}. In the FPR-plot, we see a similar pattern as is visible in the more involved experimental setup. Generally speaking, for a fixed samplesize $T$ the FPR goes up with the number of datasets $M$ while, for a fixed number of datasets $M$, it goes down with increasing samplesize $T$. This means, in the large sample limit (of both $M$ and $T$) we can expect consistent results. However, if we only have a small samplesize $T$, there are potentially inflated false positives.

\begin{figure}[htb!]
    \centering
    \includegraphics[width=\linewidth]{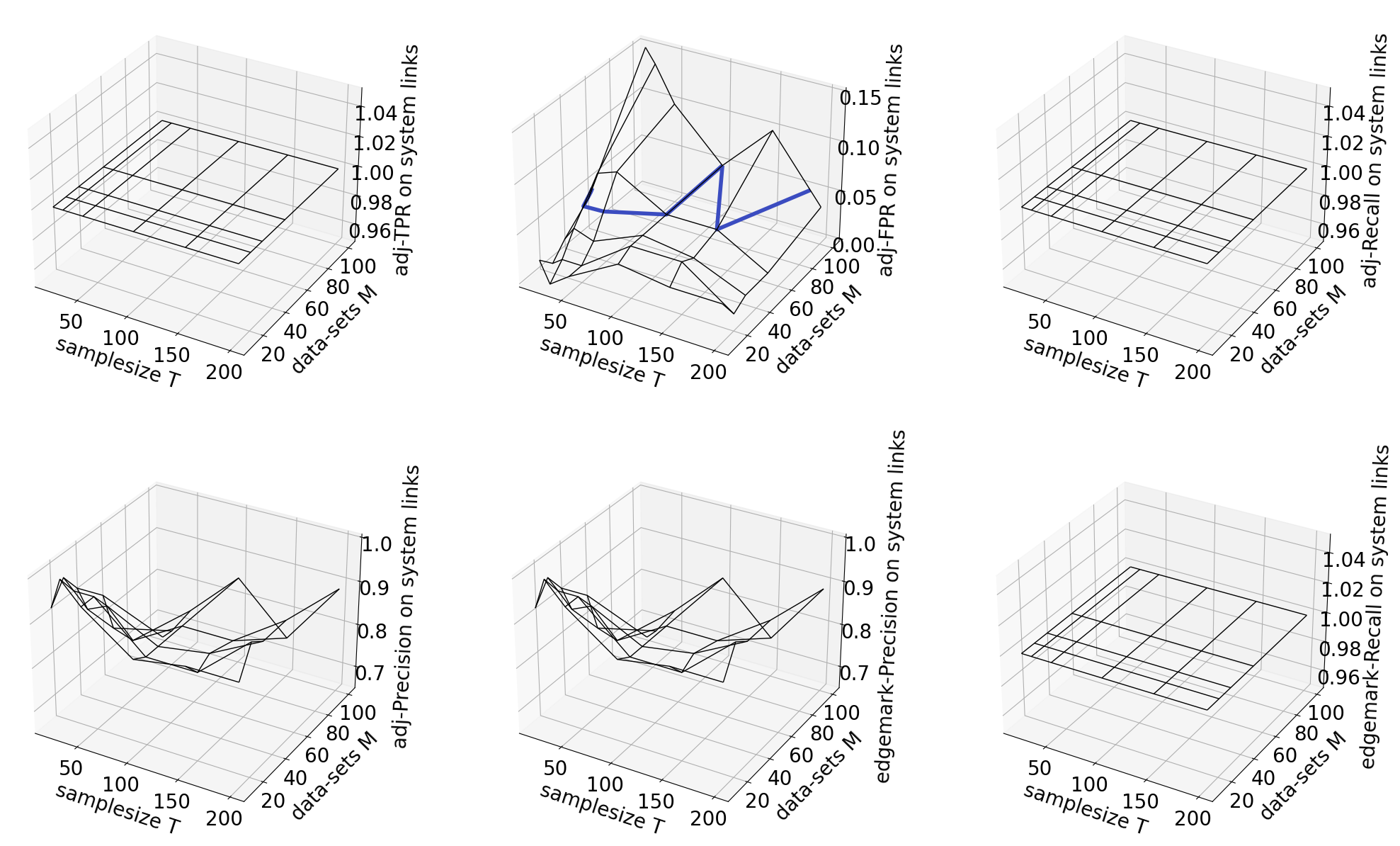}
    \caption{Discovery results of system-system links in the simplified experimental setup (section \ref{sec:simple_setup}) for varying sample sizes $T$, and number of datasets $M$.}
    \label{fig:simple_setup}
\end{figure}

\section{Nonlinear experimental setup}\label{sec:simple_setup_nonlin}

We extend the simplified experimental setup of section \ref{sec:simple_setup} a bit to allow for nonlinear mechanisms. In this way, we are able to demonstrate that our method can be flexibly combined with any CI test. In this setup, we use a CI test based on Gaussian process regression and a distance correlation (GPDC).

\begin{equation}
    \begin{split}
        &X^0_t :=  0.3(X^1_{t})^2+0.5C_\text{space}^0 - 0.2(C_{\text{time}, t-1}^0)^2 + \eta^0\\
        &X^1_t := 0.5X^1_{t-1}- 0.5(C_\text{space}^0)^2 + 0.3(C_{\text{time}, t-1}^0)^2 + \eta^1\\
        &C_\text{space}^0 := \eta^0_\text{space}\\
        &C_{\text{time},t}^0 := \eta^0_\text{time}
    \end{split}
\end{equation}

We show the results in figure \ref{fig:simple_setup_nonlin}.

\begin{figure}
    \centering
    \includegraphics[width=0.6\linewidth]{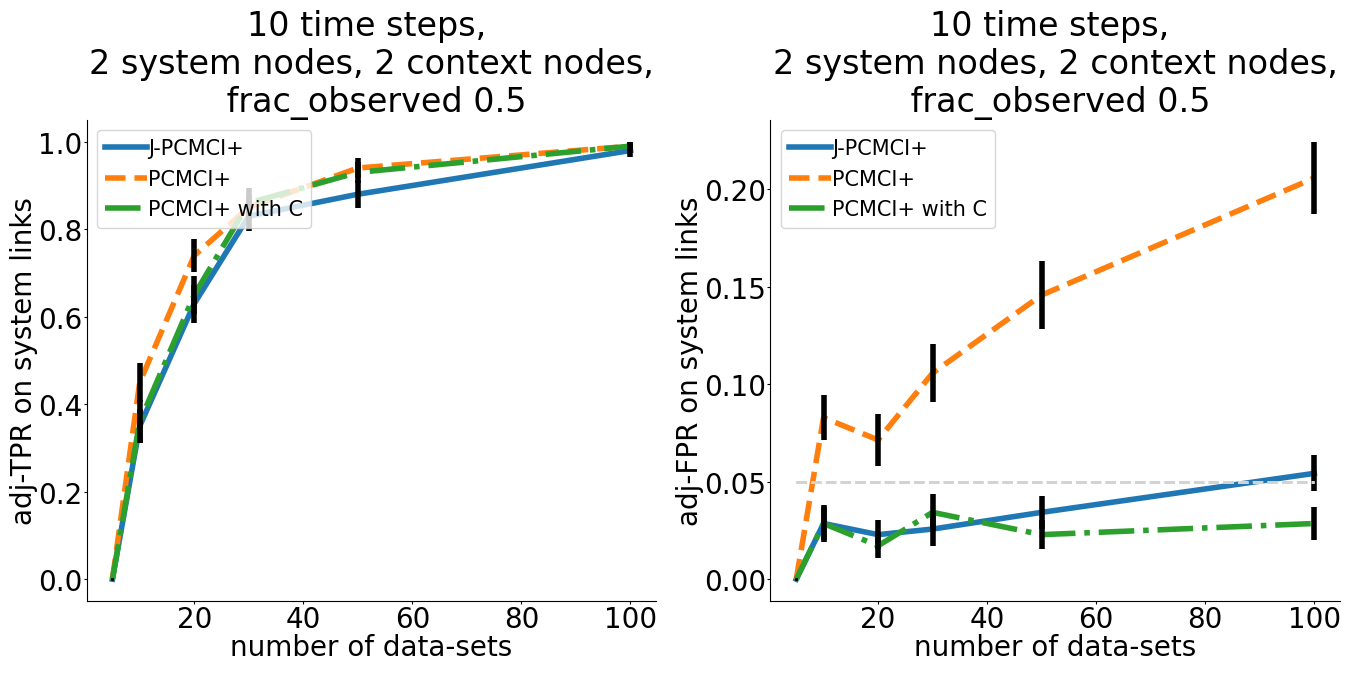}
    \includegraphics[width=0.6\linewidth]{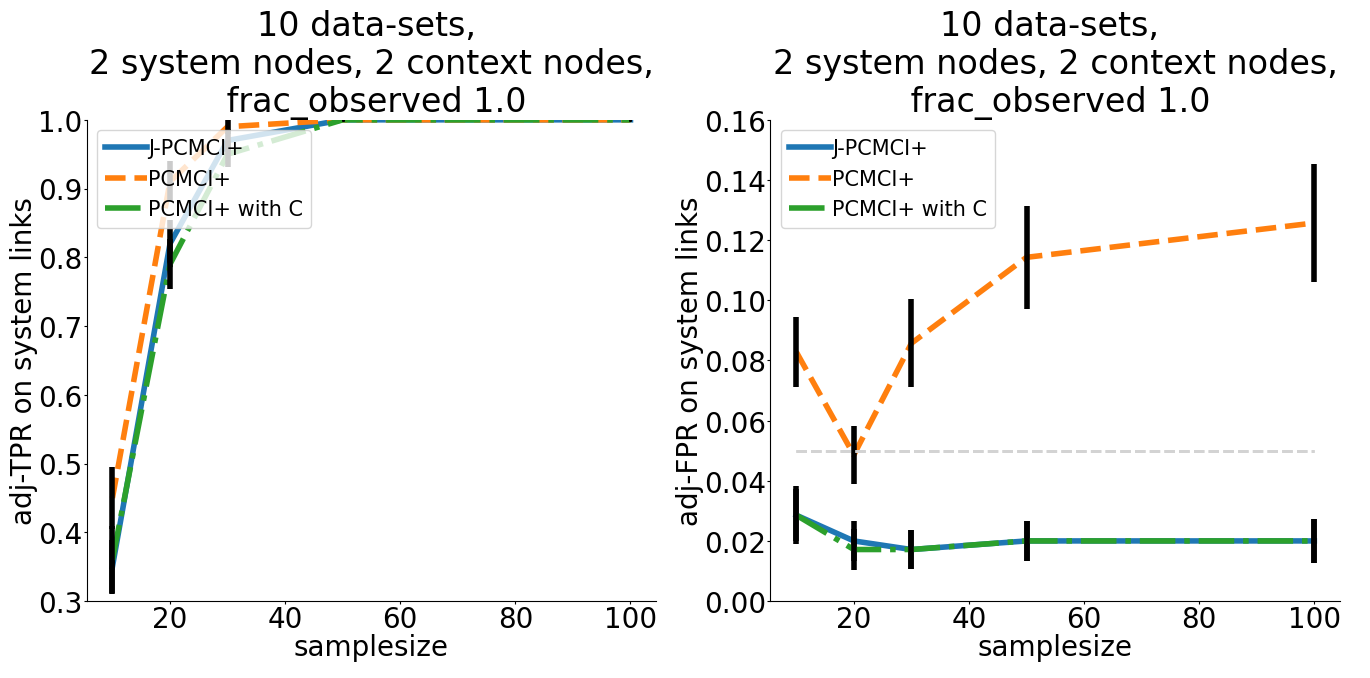}
    \caption{Discovery results of system-system links for varying sample sizes $T$, and fixed $M=10$ (top row), and varying number of contexts $M$, and fixed $T=10$ (bottom row). The data is generated according to the SCM described in section \ref{sec:simple_setup_nonlin}. In this setting all of the context nodes are observed. We compare our method (J-PCMCI+) to PCMCI$^+$ using all data of observed nodes (PCMCI+ with C) and only using data of system variables (PCMCI+).}
    \label{fig:simple_setup_nonlin}
\end{figure}

\section{Additional Plots}
\begin{figure}
    \centering
    \includegraphics[width=0.8\linewidth]{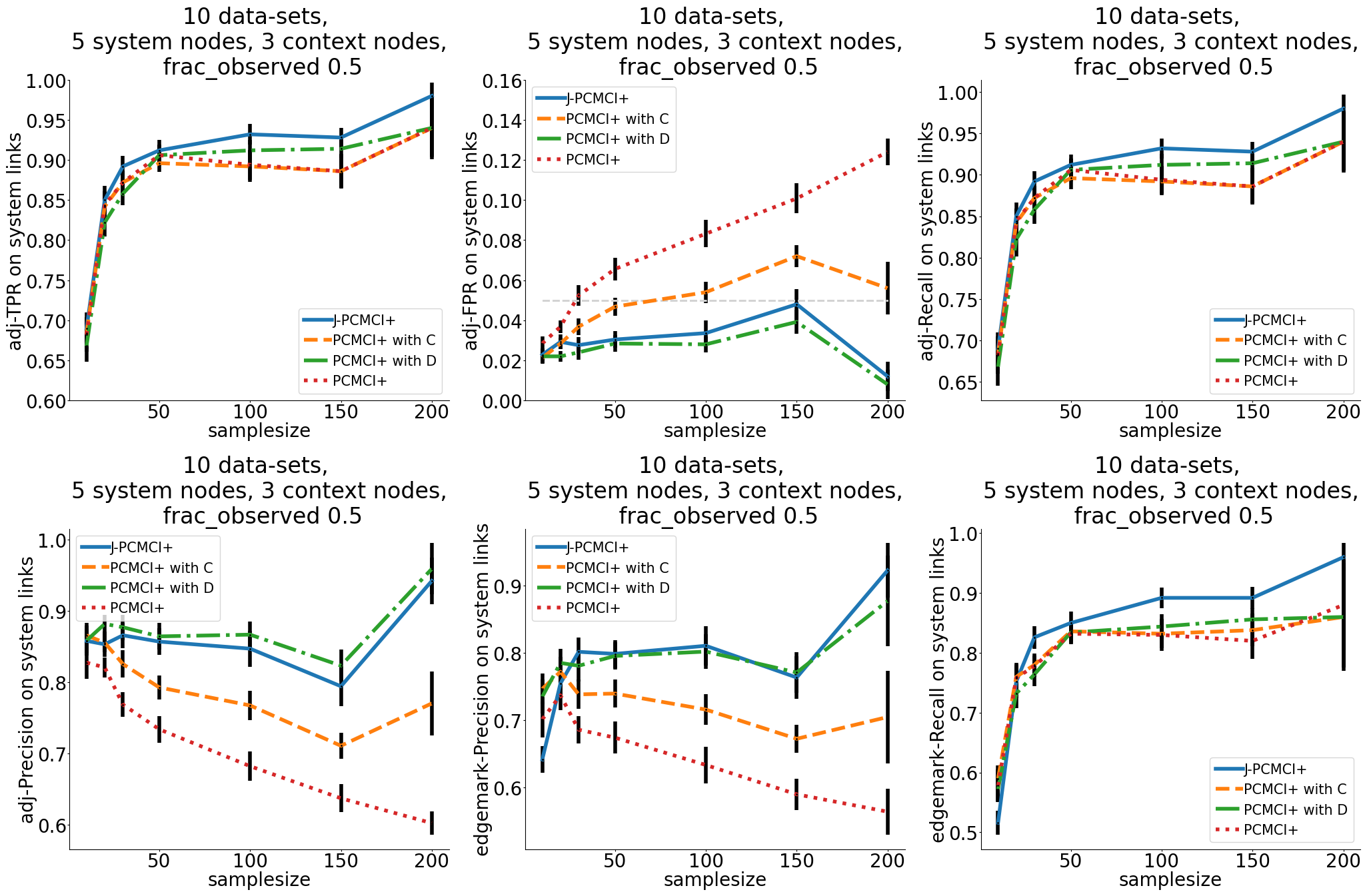}\\
    ~\\
    \includegraphics[width=0.8\linewidth]{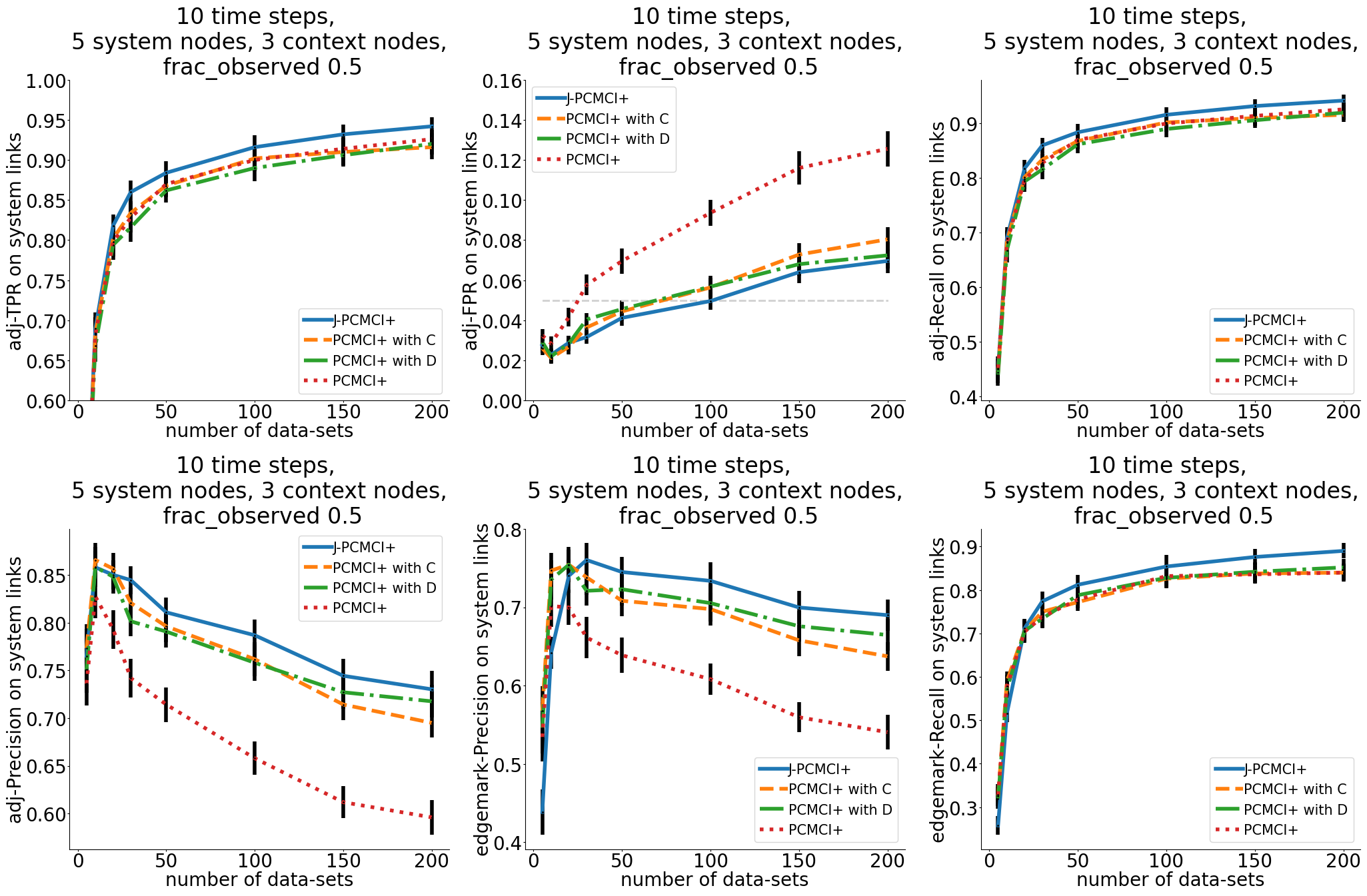}
    \caption{Discovery results of system-system links for varying sample sizes $T$, and fixed $M=10$ (top two rows), and varying number of contexts $M$, and fixed $T=10$ (bottom two rows). All other setup parameters are set as the defaults described in the main text. In this setting half of the context nodes are observed. We compare our method (J-PCMCI+) to PCMCI$^+$ using all data of observed nodes (PCMCI+ with C), using all data of system variables and including dummies (PCMCI+ with D), and only using data of system variables (PCMCI+).}
\end{figure}

\begin{figure}
    \centering
    \includegraphics[width=0.8\linewidth]{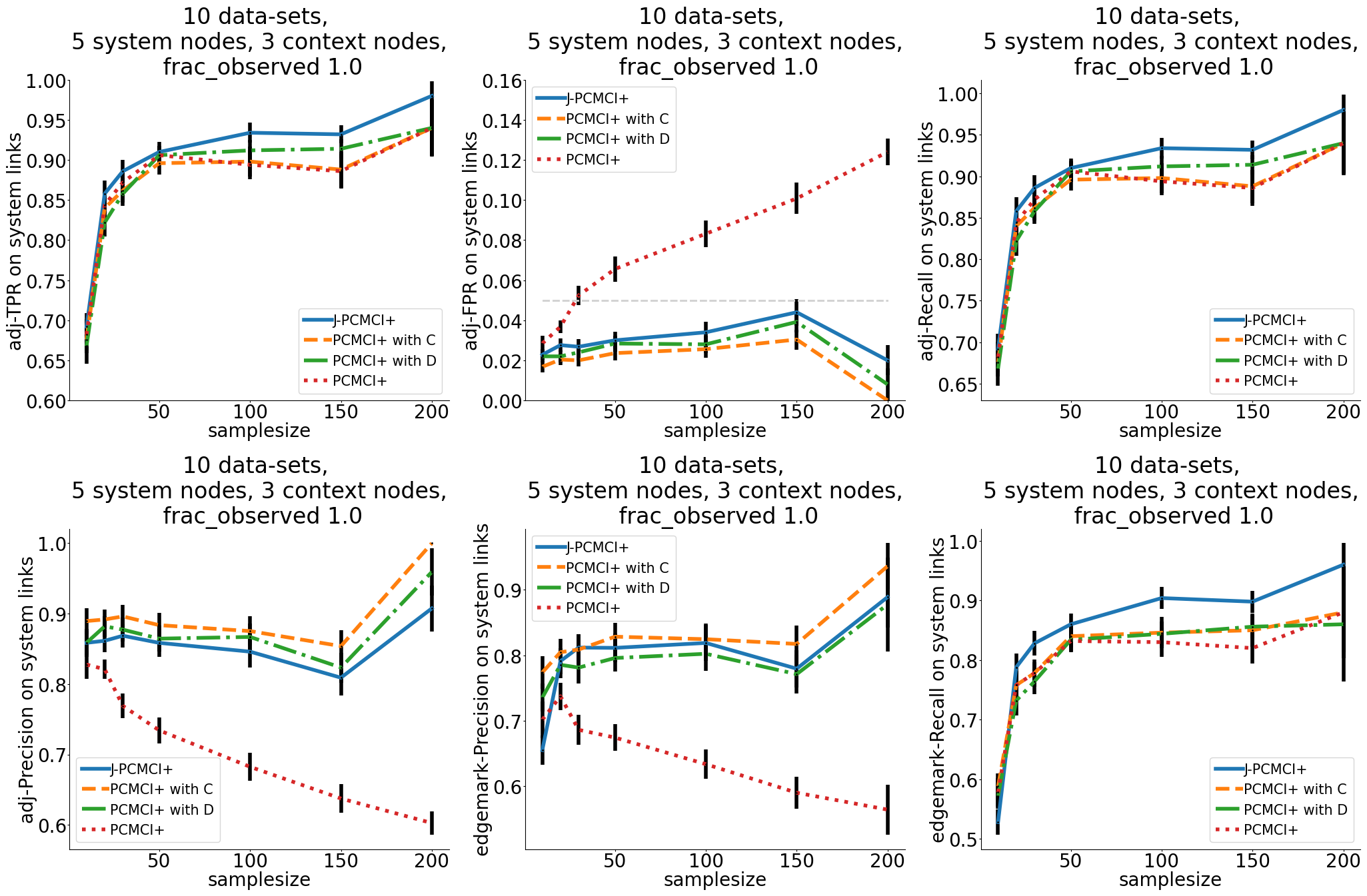}
    \includegraphics[width=0.8\linewidth]{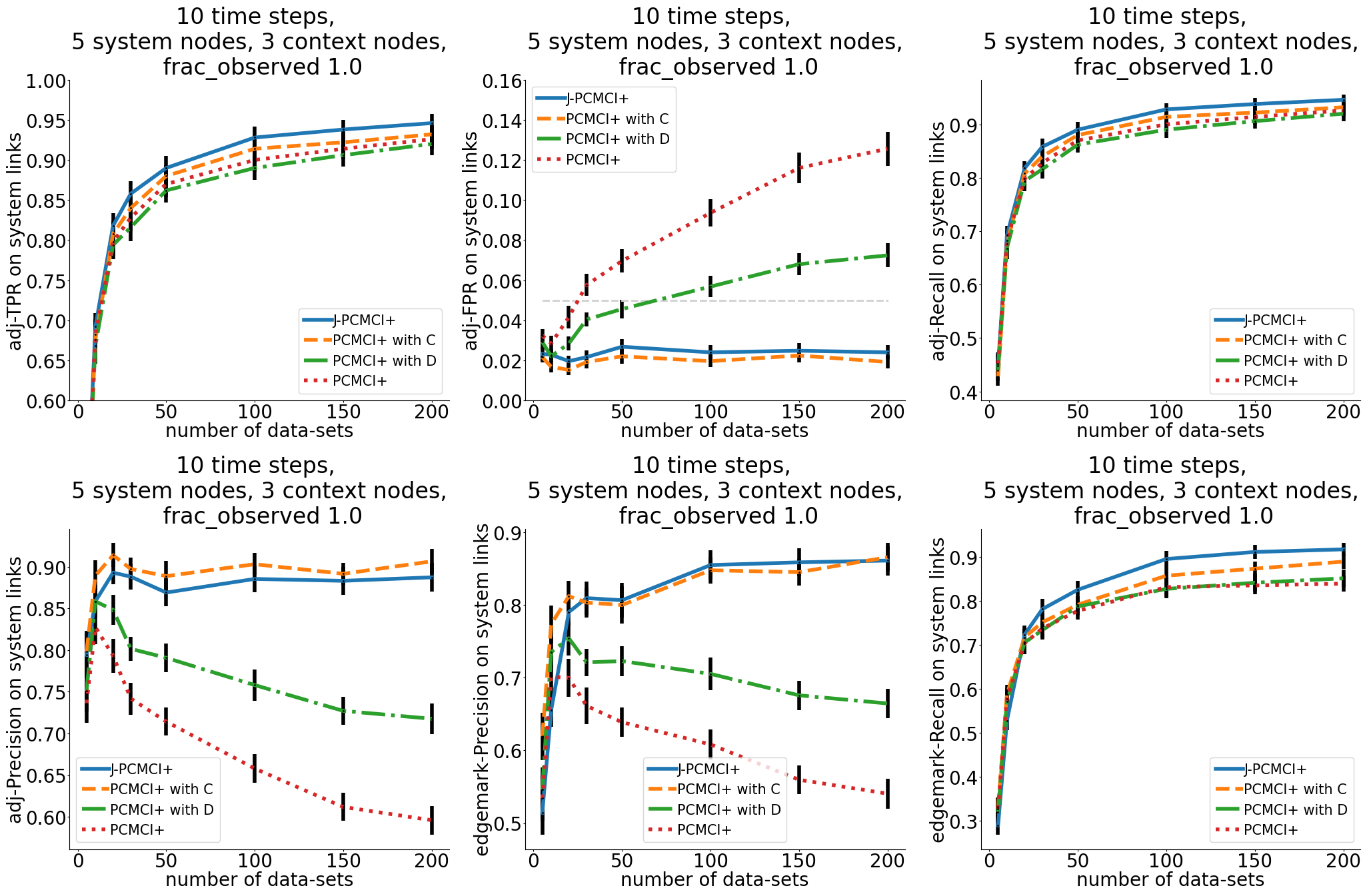}
    \caption{Discovery results of system-system links for varying sample sizes $T$, and fixed $M=10$ (top two rows), and varying number of contexts $M$, and fixed $T=10$ (bottom two rows). All other setup parameters are set as the defaults described in the main text. In this setting all of the context nodes are observed. We compare our method (J-PCMCI+) to PCMCI$^+$ using all data of observed nodes (PCMCI+ with C), using all data of system variables and including dummies (PCMCI+ with D), and only using data of system variables (PCMCI+).}
\end{figure}

\begin{figure}
    \centering
    \includegraphics[width=\linewidth]{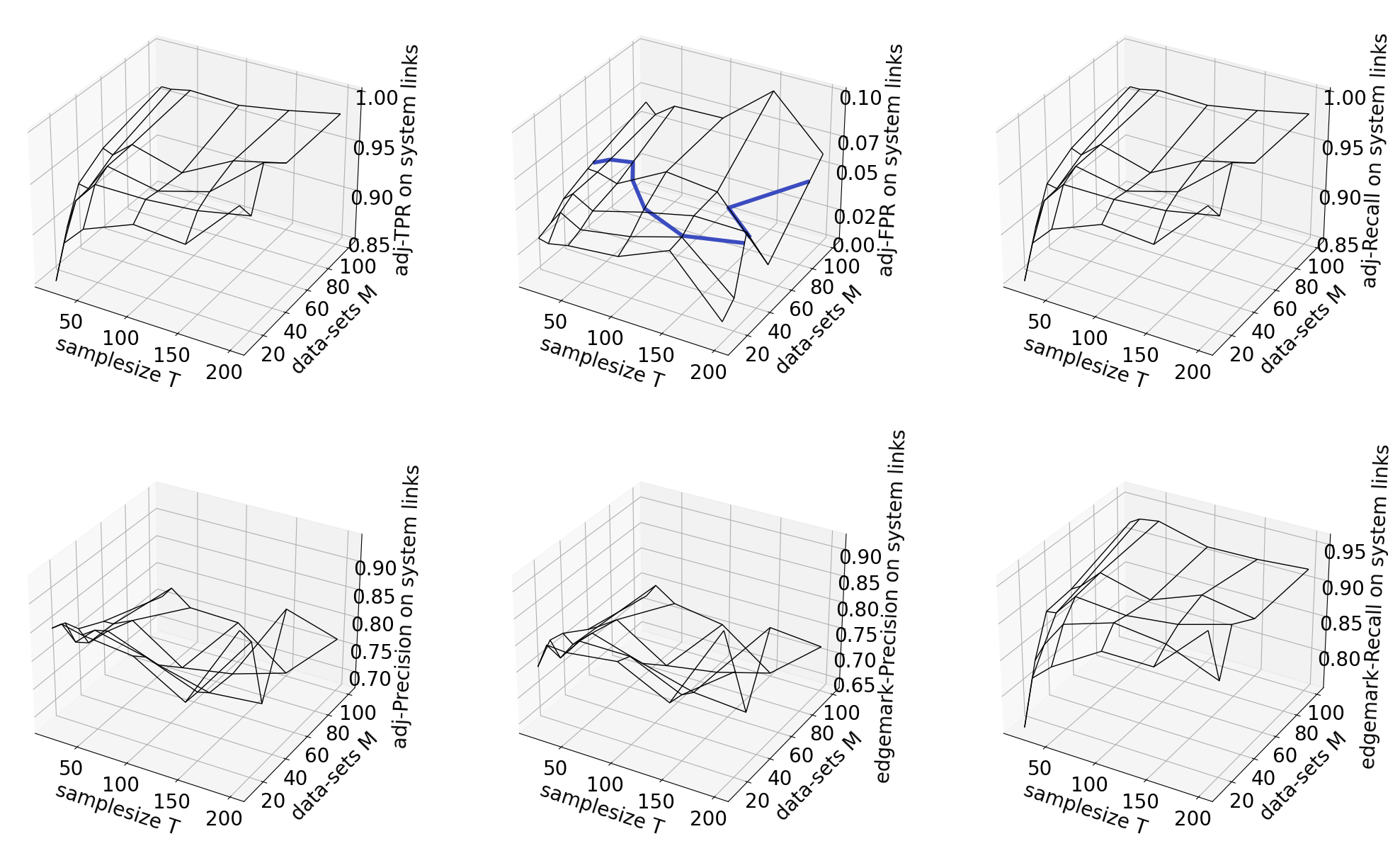}
    \caption{Discovery results of our method (J-PCMCI+) on system-system links for varying sample sizes $T$, and number of contexts $M$. All other setup parameters are set as the defaults described in the main text. In this setting half of the context nodes are observed. We show the contour line corresponding to the significance level $\alpha$ in the adjacency-FPR plot.}
\end{figure}

\begin{figure}
    \centering
    \includegraphics[width=\linewidth]{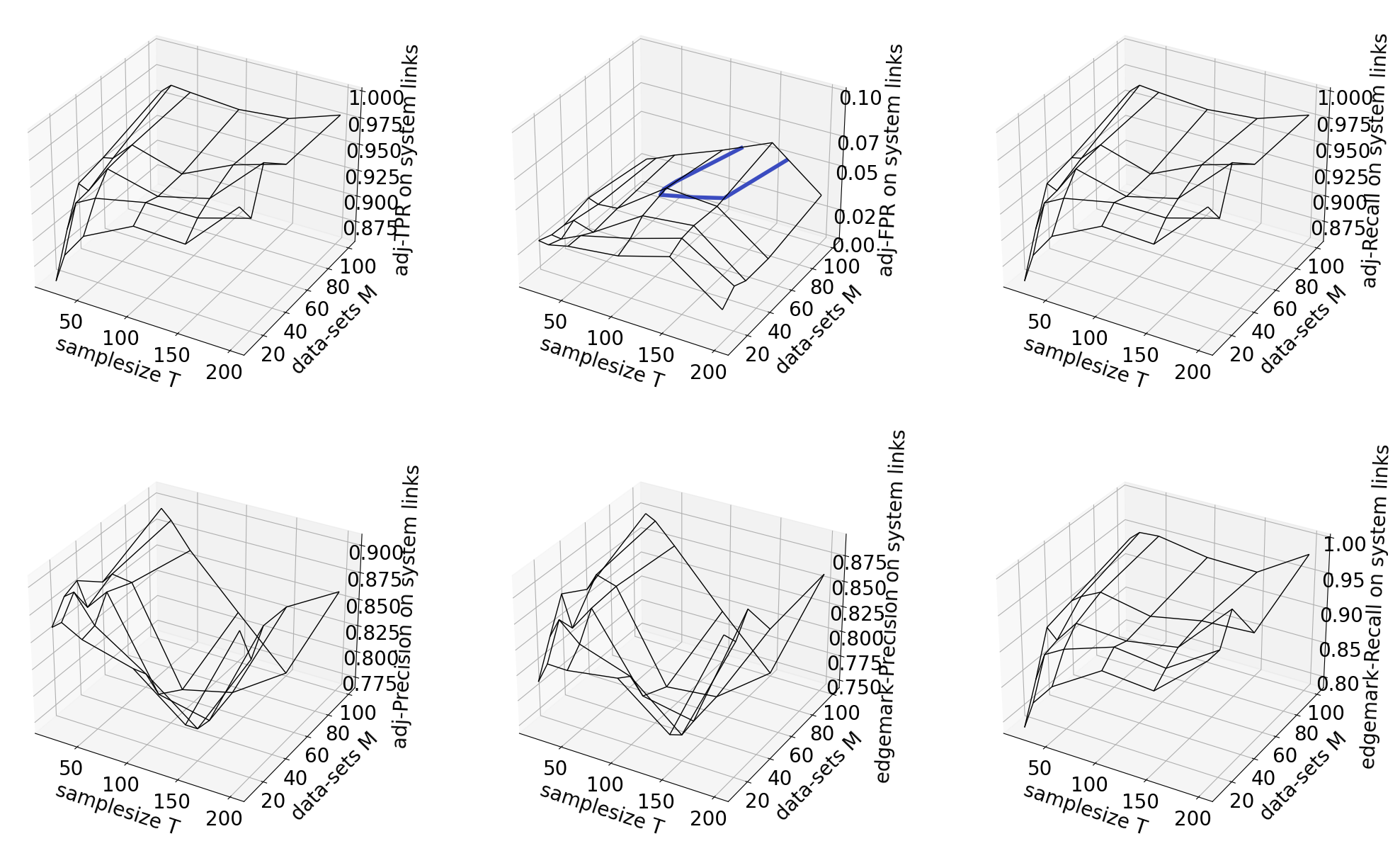}
    \caption{Discovery results of our method (J-PCMCI+) on system-system links for varying sample sizes $T$, and number of contexts $M$. All other setup parameters are set as the defaults described in the main text. In this setting all the context nodes are observed.We show the contour line corresponding to the significance level $\alpha$ in the adjacency-FPR plot.}
\end{figure}

\begin{figure}
    \centering
    \includegraphics[width=0.7\linewidth]{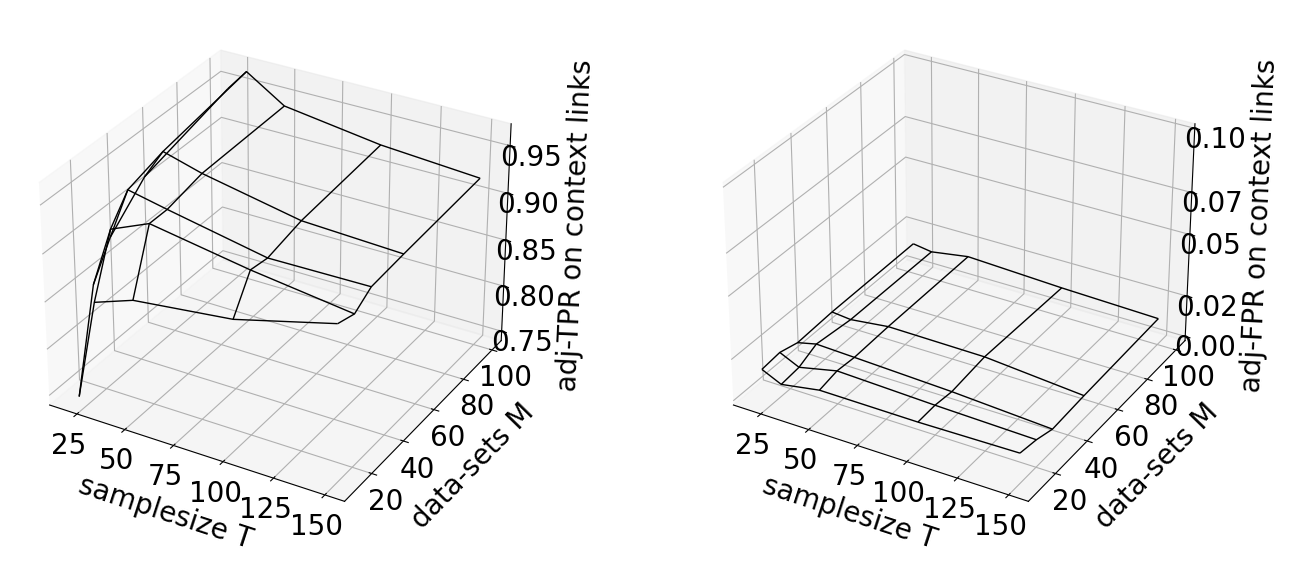}
    \caption{Discovery results of our method (J-PCMCI+) on context-system links for varying sample sizes $T$, and number of contexts $M$. All other setup parameters are set as the defaults described in the main text. In this setting all the context nodes are observed.}
\end{figure}
\newpage
\bibliography{gunther_390}